  \newclass{\StoqMA}{StoqMA}
  \theoremstyle{plain}
  \newtheorem{theorem}{Theorem}
  \theoremstyle{definition}
  \newtheorem{definition}[theorem]{Definition}
  \newtheorem{conjecture}{Conjecture}
  \theoremstyle{remark}
  \theoremstyle{plain}
  \newtheorem*{theorem*}{Theorem}
  \newtheorem*{lemma*}{Lemma}
  \newtheorem*{corollary*}{Corollary}
  \newtheorem*{proposition*}{Proposition}
  \newtheorem*{claim*}{Claim}
  \newtheorem*{problem*}{Problem}
\newcommand{\ii}{\mathbb{I}}
\newcommand{\norm}[1]{\left\| #1 \right\|}
\newcommand{\bra}[1]{\langle #1 \vert}
\newcommand{\ket}[1]{\vert #1 \rangle}
\newcommand{\fu}{Dahlem Center for Complex Quantum Systems, Physics Department, Freie Universit\"{a}t Berlin, Germany}
\newcommand{\fum}{Department of Mathematics and Computer Science, Freie Universit\"{a}t Berlin, Germany}
\definecolor{dominik}{RGB}{237,16,118}
\definecolor{daniel}{RGB}{0,0,200}
\begin{document}
\title{Pinned QMA: The power of fixing a few qubits in proofs}

\author{Daniel Nagaj}
\email[Corresponding author: ]{daniel.nagaj@savba.sk}
\affiliation{RCQI, Institute of Physics, Slovak Academy of Sciences, Bratislava, Slovakia}

\author{Dominik\ Hangleiter}
\affiliation{\fu}

\author{Jens Eisert}
\affiliation{\fu} \affiliation{\fum}

\author{Martin Schwarz}
\affiliation{\fu}
 
\begin{abstract}
What could happen if we pinned a single qubit of a system and fixed it in a particular state? First, we show that this leads to difficult static questions about ground state properties of local Hamiltonian problems with restricted types of terms. In particular, we show that the Pinned Commuting and Pinned Stoquastic Local Hamiltonian problems are QMA-complete. Second, we investigate pinned dynamics and demonstrate that fixing a single qubit via often repeated measurements results in universal quantum computation with commuting Hamiltonians. Finally, we discuss variants of the Ground State Connectivity problem in light of pinning, and show that Stoquastic GSCON is QCMA-complete. 


\end{abstract}

\maketitle

\section{Introduction}

The goal of quantum Hamiltonian complexity \cite{HamComplex,HamiltonianComplexity} 
is to study the computational power of physical models described by local 
Hamiltonians, the intricate properties of their dynamics and their eigenstates, as well as to understand the computational complexity of determining these properties.
Many Hamiltonians are known to be universal for quantum computation \cite{universal}, while others are thought to be much simpler, but still hard to investigate classically \cite{bravyihastingsTIM} or even efficiently simulable by classical computation \cite{matchKempe}.
There is a long history of searching for the simplest possible, closest to realistically and efficiently implementable, and robustly controllable interaction with universal dynamics for quantum computation with local Hamiltonians. Restrictions on the type and strength of interactions, locality, and geometrical restrictions have been investigated, e.g., 
in Refs.~\cite{KKR06, universal2body, OT06, universal, GossetTerhalParallel, hqca1D}.
Thinking about universality for computation often comes hand in hand with asking complexity questions such as identifying 
the hardness of determining the properties of the eigenstates of these Hamiltonians.

Looking at this from a quantum control theory viewpoint provides us with an interesting observation. An extra level of control over a subsystem can result in a boost in state generation possibilities, or the difficulty of complexity questions. We have seen this with the DQC1 (``one clean qubit'') model \cite{dqc1,dqc1morimae}, whose single fully initializable (clean) qubit gives rise to quantum advantage over classical computation. Similarly, if one is allowed to use magic states, computing with a restricted set of universal gates such as Clifford gates \cite{MagicStatesBravyi} becomes universal for quantum computation. Effectively fixing parts of the system to a particular state using perturbation gadgets allowed us to build complex effective Hamiltonians from simpler ones \cite{PhysRevA.77.062329}. It has also been shown that a Zeno-effect measurement of a small subsystem can grant universal power to a non-universal set of commuting gates \cite{burgarth}. 

In this work, we investigate the computational potential offered by controlling a small subsystem.
We focus on a specific type of control called {\em pinning} -- fixing the state of a small subsystem. 
Orsucci et al.\ \cite{purification} have formulated the related question of \emph{Hamiltonian purification}, investigating universal dynamics for a set of commuting Hamiltonians, projected into a particular subspace. 
As often in Hamiltonian complexity, there are two views of this task, a static and a 
dynamic one.
Our goal is to uncover in which situations pinning-induced effective interaction terms (weighted sums of the original restricted terms) lead to an increase in complexity, or state preparation power. 
 In both approaches, we prove several results complementing what we know about the hardness of problems without the special control.

First, statically, we ask about the difficulty of finding the properties of low-energy states of pinned Hamiltonians.
We pin a qubit by an external prescription and show that determining the lowest energy in the pinned subspace
 is \QMA-complete for a variety of restricted classes: commuting, stoquastic, Markov, and permutation Hamiltonians.  
With this we wish to shed light on the complexity of these problems without pinning, which we believe to be weaker. One of these is the currently actively investigated \emph{Commuting Local Hamiltonian} problem, which can be \NP-complete, or have ground states with topological order \cite{SchuchCommuting, aharonov_complexity_2018}. 
At the same time, we know that the ground state connectivity problem for commuting Hamiltonians is \QCMA-complete \cite{GMV}.
Another is the \emph{Stoquastic Local Hamiltonian} Problem, whose complexity is in the class \StoqMA\ which contains \NP\ and \MA, but is strongly believed weaker than \QMA\ \cite{bravyihastingsTIM}.
Taking the method of pinning to the extreme, we finally show that it yields \QMA-hardness results for Hamiltonians that are as simple as permutation matrices.
Some of our results on determining ground state energies of pinned problems complement the conclusions of Ref.~\cite{JGL} involving energy of the highest excited state. 

Second, dynamically, asking about the preparation power of evolution with restricted time-independent Hamiltonians combined with Zeno pinning of a qubit, we find connections to previous work on Hamiltonian purification \cite{burgarth,purification}, showing that the quantum Zeno-effect can drive \emph{efficient universal quantum computation} in several restricted settings. This includes, in particular, commuting Hamiltonians. 
Thanks to the details of the constructions, our results carry time and space requirements/guarantees from universal evolution models with unrestricted Hamiltonians.

Third, we find an application of pinning for the \emph{Ground State Connectivity (GSCON)} 
problem \cite{GharibianSikora} and its variants with restricted types of terms. 
Specifically, we prove that \emph{GSCON} with stoquastic Hamiltonians is \emph{\QCMA-complete}, complementing a similar recent result on \emph{Commuting GSCON} \cite{GMV}. 

Finally, we note that there are strong limits to the pinning technique. 
First, dimensionality arguments from Ref.~\cite{purification} mean a necessary increase in the size of the purified system in which interactions are restricted. 
Second, we encounter questions regarding locality of the required terms. 
Note that pinning does not allow us to create multiplicative effective terms, as perturbative gadgets do -- creating effective $3$-local terms from $2$-local ones. We do not know if it is possible to build gadgets for effective $k+1$ local interactions from $k$-local Hamiltonian terms with the help of pinning, for our commuting or stoquastic settings. Many such questions with low locality thus remain open.

Doing something special on a single additional qubit is not new. Besides Ref.\ \cite{GMV}, where the idea has been exploited to show that the \emph{GSCON}  problem is \QCMA-complete already for commuting Hamiltonians,
Jordan, Gosset and Love \cite{JGL} have used techniques 
tracing back to Ref.~\cite{JanWoc} to get rid of varying signs of matrix elements by increasing the system size and replacing positive 1's by 2$\times$2-identity matrices and negative $1$'s by the Pauli $X$ matrices.
They prove universality of adiabatic quantum computation in an excited state of a \emph{Stoquastic Local Hamiltonian}, 
instead of the usual ground state computation, by splitting the Hilbert space into two, depending on the state of an auxiliary qubit. Moreover, adding a stoquastic term effectively pinning this auxiliary into a state that results in a high energy, they showed \emph{\QMA}-completeness of understanding energy bounds for the highest excited energy of a Stoquastic Local Hamiltonian. \emph{Stoquastic Local Hamiltonians}
are those local spin Hamiltonians whose matrix elements in the standard basis satisfy the condition that all off-diagonal matrix elements are real and non-positive
\cite{bravyi_complexity_2009,bravyi_complexity_2006,marvian_computational_2018,Curing,Easing}. 
Next, they also show \emph{\QMA} hardness of bounding the lowest energy of doubly stochastic (Markov) matrices, and \emph{\QMA}$_1$ hardness of the \emph{Stochastic $6$-SAT problem} (deciding whether a sum of stochastic matrices is frustration-free or not).

This work is structured as follows: First, in Section~\ref{sec:static}, we show that several restricted versions of the \emph{Pinned Local Hamiltonian} problem are \emph{\QMA}-complete, in particular, commuting, stoquastic and permutation Hamiltonians. 
In Section~\ref{sec:dynpin} we then turn to the dynamical problem of universal time evolution, showing that the Zeno-pinned time evolution under both commuting and stoquastic Hamiltonians is complete for universal quantum computations. 
Finally, in Section~\ref{sec:GSCON}, we prove that the stoquastic \emph{GSCON} problem is \QCMA-complete and discuss the \emph{free fermionic GSCON} problem.

\section{Pinned Local Hamiltonians: a complexity viewpoint}
\label{sec:static}
\subsection{Local Hamiltonians and states with fixed qubits}

In \emph{\QMA}, a verifier asks for a witness of the form $\ket{\psi}$, to which she adds a few auxiliary qubits and verifies it with a quantum circuit $V$. Does anything change, if she demands that the witness must have a few qubits that are \emph{pinned} to some fixed state? No, as the verifier can ask for all but the pinned qubits of the witness, supply those pinned qubits on her own, and verify the whole state as before.

Rather straightforwardly, we can show that problems in the class \emph{\QMA} can be verified using \emph{Pinned \QMA} and vice versa, so that 
\emph{Pinned \QMA} $=$ 
\emph{\QMA}. If we ask for a {\em pinned} proof of the form $\ket{\psi'} = \ket{\psi}\ket{0}$, with one pinned qubit, the extra demand does not increase the complexity of the problem. If the verifier that asks for $\ket{\psi'}$ is $V'$, the same thing can be verified in \emph{\QMA} with a modified circuit $V$ which adds one more auxiliary system that stores a check of whether the pinned qubit is really $\ket{0}$, and then does the verification $V'$, accepting only if both are accepted. Thus, \emph{Pinned \QMA} can be verified in \emph{\QMA}.
On the other hand, for any \emph{\QMA} verifier circuit $W'$ that demands a witness $\ket{\phi'}$, there exists a pinned version, which demands a witness $\ket{\phi} = \ket{\phi'}\ket{0}$ with one extra qubit, and whose verifier circuit $W$ simply disregards the pinned qubit and verifies only the $\ket{\phi'}$ part with $W'$.

However, things are not quite as straightforward when 
instead of \emph{\QMA} witnesses we start pinning qubits of low energy states for the \emph{Local Hamiltonian problem}. Let us consider the \emph{\QMA}-complete problem 
\emph{Local Hamiltonian (LH)}, and investigate the pinning requirement. Imagine we look at a Hamiltonian $H'$, and ask if there exists a low energy state of the form $\ket{\psi'} = \ket{\psi}\ket{0}$. We call this problem \emph{Pinned LH}. 

\begin{definition}[The $p$-Pinned $k$-Local Hamiltonian Problem]
\label{def:pinLH}
Consider a $k$-local Hamiltonian $H$ for a system of size $n$, a $p$-qubit state vector $\ket{\phi}$,
with $p=\textrm{poly}(n)$
and two energy bounds $b$, $a$, such that $b-a \geq {1}/{\textrm{poly}(n)}$.
You are promised that either: 
\begin{itemize}
	\item[YES] There exists an $n-p$ qubit state vector $\ket{\psi}$, such that the energy of the $n$-qubit state vector  
	$\ket{\psi}\ket{\phi}$ with respect to $H$ is at most $a$, or
	\item[NO] for any state vector $\ket{\psi}$, the energy of the $n$-qubit state  vector $\ket{\psi}\ket{\phi}$ with respect to $H$ is at least $b$.
\end{itemize}
Decide, which is the case.
\end{definition}

We will prove the following theorem:
\begin{theorem}[\QMA-completeness of the Pinned $k$-Local Hamiltonian Problem]
The Pinned $k$-Local Hamiltonian Problem is \emph{\QMA}-complete.
\end{theorem}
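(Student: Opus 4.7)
The plan is to prove both directions separately: containment in \emph{QMA} and \emph{QMA}-hardness. For containment, I would construct a \emph{QMA} verifier that takes as witness the $(n-p)$-qubit state vector $\ket{\psi}$. Since the pinned state $\ket{\phi}$ is classically specified as part of the input (with $p=\mathrm{poly}(n)$, and in the natural setup is efficiently preparable, e.g.\ a computational basis state such as $\ket{0}^{\otimes p}$), the verifier can prepare it herself. She then forms the composite state $\ket{\psi}\otimes\ket{\phi}$ on $n$ qubits and runs the standard Kitaev phase estimation protocol for \emph{Local Hamiltonian}, estimating the energy with respect to $H$ to precision better than $(b-a)/3$, and accepting if the estimate lies below $(a+b)/2$. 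Completeness and soundness follow from the usual analysis: in a YES instance, the optimal $\ket{\psi}$ yields expected energy at most $a$, while in a NO instance every $\ket{\psi}$ forces the energy above $b$, and the $1/\mathrm{poly}(n)$ promise gap accommodates the phase estimation error with standard amplification.

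For \emph{QMA}-hardness, the reduction from the \emph{QMA}-complete Local Hamiltonian problem is essentially trivial. Given a $k$-local Hamiltonian $H$ on $n$ qubits with thresholds $a,b$, I construct a Pinned instance by appending $p$ extra qubits pinned to $\ket{\phi}=\ket{0}^{\otimes p}$ and taking the Hamiltonian $H' = H \otimes \ii_{p}$ on $n+p$ qubits. Since $H'$ acts as the identity on the pinned register, the energy of $\ket{\psi}\otimes\ket{0}^{\otimes p}$ with respect to $H'$ equals $\braL{\psi} H \ketL{\psi}$, so the minimum over $\ket{\psi}$ coincides with the ground state energy of $H$ and the YES/NO promise is preserved exactly. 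The $k$-locality of $H$ and the promise gap $b-a$ are inherited unchanged. (If the problem definition further requires $p\geq 1$, which it implicitly does, the construction already supplies at least one pinned qubit.)

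The main observation --- and the reason no real obstacle arises here --- is that a classically specified pinned factor hides no information from the verifier: she can always reintroduce it on her end, and the Hamiltonian can be padded to act trivially on the pinned register. The substantive content of the pinning idea therefore lies not in this theorem but in the restricted variants addressed subsequently: once $H$ is forced to lie in a structurally limited class (commuting, stoquastic, or permutation Hamiltonians), the padding trick $H\otimes \ii$ stays inside the class but no longer suffices to encode arbitrary \emph{QMA} verifications, and genuine gadget constructions become necessary. For the unrestricted statement proved here, no gadgets, energy rescaling, or perturbative arguments are needed.
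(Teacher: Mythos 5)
Your proposal is correct and follows essentially the same route as the paper: \emph{QMA}-hardness via the padding $H' = H\otimes\ii$ with the pinned register fixed to $\ket{0}$, and containment by having the verifier supply the pinned state herself and run the standard energy-estimation protocol. The paper's proof only differs in that, after this core argument, it optionally adds an explicit reverse embedding of \emph{Pinned LH} into ordinary \emph{LH} by penalizing the unpinned subspace with a term $\Delta\,\ii\otimes\ket{1}\bra{1}$ and tracking how the promise gap translates, which your argument does not need.
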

\begin{proof}
First, on the one hand, \emph{Pinned LH} is no easier than \emph{LH}, because for any local Hamiltonian $H$, we can choose choose $\ket{\phi}=\ket{0}$ and set up
\begin{align}
	H' = H \otimes \ii.
\end{align}
There exists a low-energy state of $H'$ of the form $\ket{\psi}\ket{0}$ if and only if there exists a low-energy state vector $\ket{\psi}$ of $H$. Thus, \emph{Pinned LH} is 
\emph{\QMA}-hard as solving it allows one to solve the \emph{LH Problem}. 
On the other hand, observe that \emph{Pinned LH} belongs to \emph{\QMA}. We can set up a quantum verifier that receives the witness $\ket{\psi}$, adds its own single-qubit state vector $\ket{\phi}$, and then tests whether the state vector $\ket{\psi}\ket{\phi}$ has low enough energy for the \emph{Pinned Local Hamiltonian} $G'$. 
In summary, \emph{Pinned LH} is \emph{\QMA}-complete.

This could be the end of the proof. However, one might desire more details in order to understand how to translate the energy bounds between these problems. 
We can explicitly set up the \emph{LH} problem to contain \emph{Pinned LH} for example as follows.
Let us construct a Local Hamiltonian $G$, which has a low-energy state if and only if a \emph{Pinned LH} $G'$ has a low-energy state vector of the form $\ket{\psi}\ket{\phi}$. 
Without loss of generality, we can again take $\ket{\phi}=\ket{0}$, by a local basis transformation on the operators acting on the last qubit.

Let us then set up a Local Hamiltonian $G$ retaining the properties of a pinned $G'$ by penalizing the additional qubit with energy $\Delta > 0$ if 
it is not in the desired pinned state vector $\ket 0$,
\begin{align}
	G = G' + \Delta \, \ii \otimes \ket{1}\bra{1}. 
\end{align}
If there exists a state vector of the form $\ket{\psi}\ket{0}$ for the \emph{Pinned Local Hamiltonian}  $G'$ with energy $E_{\psi,0} \leq a$, the same state will also have a ``low'' energy for the local Hamiltonian $G$,
\begin{align}
	\bra{0} \bra{\psi} G \ket{\psi}\ket{0} \leq a.
	\label{abound}
\end{align}
On the other hand, if it is the case that any state vector of the form 
$\ket{\psi}\ket{0}$ has energy at least $E_{\psi,0} \geq b$, then 
taking a general state vector,
\begin{align}
	\ket{S} &= (\cos \varphi) \ket{\psi_0}\ket{0} + (\sin \varphi) \ket{\psi_1}\ket{1},
\end{align}
we can show that the ground state energy of the local Hamiltonian $G$ obeys
\begin{align}
	E_S  = \, &  \bra{S}G\ket{S} \\
	= \, & (\cos^2 \varphi) \bra{0}\bra{\psi_0}G'\ket{\psi_0}\ket{0}\nonumber
	\\&+ (\sin^2 \varphi) \left( \bra{1}\bra{\psi_1}G'\ket{\psi_1}\ket{1} + \Delta\right) \nonumber\\
	&+ (\cos \varphi \sin \varphi)
		\left( \bra{1}\bra{\psi_1}G'\ket{\psi_0}\ket{0}  + c.c.
		\right) \nonumber\\
	\geq \, & b \cos^2 \varphi + \Delta \sin^2 \varphi \\
	& + (\sin^2\varphi )\bra{1}\bra{\psi_1}G'\ket{\psi_1}\ket{1} \nonumber\\
	& + (\sin 2\varphi) \textrm{Re}\left[\bra{1}\bra{\psi_1}G'\ket{\psi_0}\ket{0}\right] \nonumber\\
	\geq \, & 
	b \cos^2 \varphi + \left(\Delta-\norm{G'}\right) \sin^2 \varphi - \sin 2\varphi \norm{G'}. 
\end{align}
Let us label $c:=\Delta-\norm{G'}$ and $d:=\norm{G'}$ to write
\begin{align}
	E_S &\geq \frac{b}{2} (1+\cos 2\varphi) + \frac{c}{2} \left( 1-\cos 2\varphi\right) - d \sin 2\varphi.
\end{align}
Assuming $c-b>d>0$, it is easy to find that the extrema of this expression appear at 
\begin{align}\tan 2\varphi = \frac{2d}{c-b},\end{align} producing
\begin{align}
	E_S &\geq 
    \frac{1}{2}\left(c+b-\sqrt{(c-b)^2+(2d)^2}\right).
\end{align}
Let us now set 
\begin{align}
c=\frac{1}{2}\left(b+a+\frac{(2d)^2}{b-a}\right),\end{align}  
i.e., 
\begin{align}
\Delta = c+d = \frac{b+a}{2}+d \left(\frac{2d}{b-a}+1\right) = \textrm{poly}(n).
\end{align} 
With basic algebra, recalling $b>a$, we can show that this satisfies
$c-\sqrt{(c-b)^2+(2d)^2}\geq a$, 
and thus
\begin{align}
	E_S &\geq \frac{a+b}{2},
\end{align}
which means in the {\em NO} instances, the ground state energy will be at least $({a+b})/{2}$, which is at least an inverse polynomial above the lower bound $a$ in the {\em YES} instances. Together with \eqref{abound}, this means we have translated the original problem's energy bounds to $a'=a$ and $b'= ({a+b})/{2}$, halving the promise gap of the original \emph{Pinned LH}.
\end{proof}

Therefore, we have not really changed the complexity of the general local Hamiltonian problem by the pinning requirement. However, the situation surprisingly changes when we start thinking about Hamiltonians whose terms come from a restricted class, as we will show in the following sections.


\subsection{Pinned Commuting Local Hamiltonian}
\label{sec:pincomm}

Pinning a qubit effectively projects into a subspace of the entire Hilbert space. 
When the original Hamiltonian comes with some restrictions, these may be lifted after this projection. 
Here and in the following sections, we investigate such cases.
First, we claim that pinning a qubit for a {\em commuting local Hamiltonian} and asking about the lowest possible energy of such a state is as difficult as asking about the ground state energy of a generic local Hamiltonian.

Note that the complexity of the original (unpinned) \emph{Commuting Local Hamiltonian} problem is an open question. The restriction to commuting terms suggests the problem is not very different from classical. Schuch has showed that this problem is in \NP\ for plaquette (4-local) interaction terms on a square lattice of qubits \cite{SchuchCommuting}, i.e. there exist classical proofs that such Hamiltonians have energy lower than some bound.
This result has been expanded and improved in work by Aharonov et al. \cite{aharonov_complexity_2011,aharonov_complexity_2018}. Importantly, though, the complexity of the problem is unknown for generic graphs, larger locality, and larger local dimension terms. Importantly, already quite simple commuting local Hamiltonians have ground states with topological order (e.g. the toric code \cite{kitaev_fault-tolerant_2003}), so the complexity of finding the properties of their ground states could be much harder. In particular, the commuting ground state connectivity problem about the structure of the ground state is \QCMA-complete. Note though, that this is likely lower than \QMA-completeness.
This all motivates us to investigate \emph{Pinned Commuting LH}. 
We will now prove our first result:

\begin{theorem}[\QMA-completeness of the Pinned Commuting $3$-local Hamiltonian problem]
\label{th:comm}
The \emph{Pinned Commuting $3$-local Hamiltonian} problem is \emph{\QMA}-complete.
\end{theorem}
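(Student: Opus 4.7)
The plan is to prove both membership and hardness.

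Membership in QMA is essentially immediate and follows the same line of reasoning as for the unrestricted Pinned LH: a verifier receiving the witness $\ket{\psi}$ on the $n-p$ non-pinned qubits appends the explicitly specified $\ket{\phi}$ on her own registers, then runs standard Kitaev-style phase estimation on $H$ to estimate the energy of $\ket{\psi}\ket{\phi}$. The commuting promise is a restriction on the problem, not an obstacle to the verifier.

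For hardness, the plan is to reduce from a QMA-complete $2$-local Hamiltonian $H$ whose terms split into exactly two mutually commuting families, $H = H_A + H_B$, with every two terms within $H_A$ (resp.\ $H_B$) commuting, but $[H_A, H_B] \neq 0$ in general. A convenient target is a Hamiltonian of transverse-field Ising-like type in which $H_A$ collects only Pauli-$Z$-type interactions (e.g.\ $Z_i Z_j$ and $Z_i$, all mutually diagonal hence commuting) and $H_B$ collects only Pauli-$X$-type interactions ($X_i$ and $X_iX_j$, mutually commuting but non-commuting with the $Z$-type family). Given such a base instance with thresholds $a < b$, introduce a single ancilla qubit $c$ and define
\begin{align*}
H' = H_A \otimes \ket{0}\bra{0}_c + H_B \otimes \ket{1}\bra{1}_c,
\end{align*}
together with the pinned state vector $\ket{\phi} = \ket{+}_c$. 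Each term in $H'$ is the tensor product of a $2$-local system operator and a single-qubit diagonal projector on $c$, so every term is $3$-local. Commutativity is direct: terms from the same family commute because their system parts do and the ancilla projector is shared; terms from different families commute because the orthogonal projectors $\ket{0}\bra{0}_c$ and $\ket{1}\bra{1}_c$ annihilate each other in both orders.

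Pinning $c$ to $\ket{+}$ yields the effective operator $\bra{+}H'\ket{+} = \tfrac{1}{2}H_A + \tfrac{1}{2}H_B = \tfrac{1}{2}H$ on the remaining $n$ qubits, because $|\braket{+}{0}|^2 = |\braket{+}{1}|^2 = \tfrac{1}{2}$. Hence the ground state energy of $H_\phi$ is exactly $E_0/2$ where $E_0$ is the ground-state energy of $H$, so the original thresholds $(a,b)$ translate cleanly into pinned thresholds $(a/2, b/2)$ and the inverse-polynomial promise gap is preserved up to a factor of two. Combined with membership, this yields QMA-completeness.

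The main obstacle is the first ingredient of the reduction, namely exhibiting a $2$-local QMA-hard Hamiltonian whose non-commutation graph admits a two-coloring, so that a single ancilla qubit with one orthogonal pair of projectors suffices without breaking $3$-locality. If only a less structured $2$-local QMA-hard family is readily available, one needs $\lceil \log c \rceil$ ancilla qubits to handle $c>2$ commuting color classes, which would push locality beyond $3$. Resolving this either appeals to an off-the-shelf QMA-hardness result for a bipartite-commutation model, or inserts a gap-preserving preprocessing step (for instance splitting the propagation terms of a Kitaev history Hamiltonian into an odd/even-clock bipartition while keeping all diagonal input/output/clock terms in one of the two groups) that packages a generic QMA-complete LH into the required two-family form.
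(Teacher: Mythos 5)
Your proposal is correct and is essentially the paper's own proof: the paper attaches the orthogonal projectors $\ket{+}\bra{+}$ and $\ket{-}\bra{-}$ on one ancilla to the two commuting families and pins the ancilla to $\ket{0}$, which is your construction up to a Hadamard on the ancilla, with the same factor-$\tfrac{1}{2}$ rescaling of the thresholds to $(a/2,b/2)$. The ``main obstacle'' you flag is resolved exactly as you hoped by an off-the-shelf result: the paper invokes the QMA-completeness of the $2$-local Hamiltonian built from $Z$, $X$, $ZZ$, $XX$ terms \cite{BiamonteLove}, whose $Z$-type and $X$-type terms form precisely the two mutually commuting families your reduction needs, so no preprocessing step is required.
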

The \emph{Pinned Commuting $k$-local Hamiltonian} problem is defined analogously to Definition~\ref{def:pinLH}, with an additional condition: the Hamiltonian's terms commute with each other. Let us prove it is \emph{\QMA}-complete.

\begin{proof}
First, note that the \emph{Pinned Commuting $k$-local Hamiltonian} problem is in \emph{\QMA}, just as \emph{Pinned LH} is. The harder direction is to show that commuting terms plus pinning can result in complexity equal to the case of unrestricted local Hamiltonians.
Thanks to Ref.~\cite{BiamonteLove}, we know that the $2$-local Hamiltonian problem made from 
$Z$, $X$, $ZZ$, and $XX$ terms is \emph{\QMA}-complete.
Let us take such a Hamiltonian and split it into two groups, one made from $ZZ$ and $Z$ terms, and the other made from $XX$ and $X$ terms.
The terms within each group commute with each other. 
Let $H = \sum_i A_i + \sum_j B_j$ be such a non-commuting $k$-local Hamiltonian, where in the group $A = \sum_i A_i$, all the $A_i$ commute with each other, and in $B=\sum_j B_j$, all the terms $B_j$ commute with each other. Assume the Local Hamiltonian promise problem for this $H$ has energy bounds $b$ and $a$.
Let us now add another qubit to the system, and modify the terms to
\begin{align}
	A'_i &= A_i \otimes \frac{1}{2}\left(\ii+X\right)_{n+1}
		= A_i \otimes \ket{+}\bra{+}_{n+1}, \label{commutingpin}\\
	B'_j &= B_j \otimes \frac{1}{2}\left(\ii-X\right)_{n+1}
		= B_j \otimes \ket{-}\bra{-}_{n+1}, \nonumber
\end{align}
similarly to the approach taken in Ref.~\cite{GMV}.
These terms form a fully commuting, $(k+1)$-local Hamiltonian $H' = \sum_i A'_i + \sum_j B'_j$. How much power would we have if we could figure out whether $H'$ has a low-energy state vector of the form $\ket{\psi}\ket{0}$?
Observe on the one hand that when we pin the last qubit to the state vector  $\ket{0}$, the expectation values of the $A_i$'s and $B_j$'s become
\begin{align}
	\bra{0}\bra{\psi}A'_i \ket{\psi}\ket{0} 
		&= \frac{1}{2} \bra{\psi}A_i \ket{\psi},
\\
	\bra{0}\bra{\psi}B'_j \ket{\psi}\ket{0} 
		&= \frac{1}{2} \bra{\psi}B_j \ket{\psi}.
\end{align}
Thus, if the original $k$-local $H$ has a ground state vector $\ket{\psi}$ with energy $a$, the state vector $\ket{\psi}\ket{0}$ will have energy 
${a}/{2}$ for the new commuting Hamiltonian $H'$, as $\bra{0}\bra{\psi}H'\ket{\psi}\ket{0} = \frac{1}{2}\bra{\psi}H\ket{\psi}$.
On the other hand, if the energy of any state vector $\ket{\psi}$ for the Hamiltonian $H$ is at least $b$, 
the energy of any state vector $\ket{\psi}\ket{0}$ for the new commuting Hamiltonian $H'$ is at least ${b}/{2}$.

Therefore, if one could solve a \emph{Pinned Commuting $(k+1)$-Local Hamiltonian} problem on 
$n+1$ qubits, with promise $\frac{b}{2},\frac{a}{2}$, one could use this to solve 
a \emph{$k$-Local Hamiltonian} problem (made from two commuting groups of terms) on an $n$ qubit systems, with promise bounds $b,a$. As the original problem is \emph{\QMA}-hard for $k=2$, we have thus proven that \emph{$3$-local Pinned Commuting Local Hamiltonian} is \emph{\QMA}-complete.
\end{proof}

Note that our construction is not geometrically local, as it requires interaction with the pinned qubit for all original particles. We leave the possibility of geometric locality as an open question.

\subsection{Pinned Stoquastic Local Hamiltonian}
\label{sec:pinstoq}

Let us look at another restricted class -- stoquastic Hamiltonians with non-positive off-diagonal terms. 
For such Hamiltonians an important obstacle to classical simulation via Quantum Monte Carlo -- the sign problem -- does not arise \cite{loh_sign_1990}. 
The local Hamiltonian for stoquastic Hamiltonians defines the complexity class \emph{\StoqMA}  \cite{bravyi_complexity_2006}, which is believed to be strictly smaller than \emph{\QMA} for the above reason.
In particular, stoquastic Hamiltonians are not thought to be universal for quantum computing.
What happens when we pin some of the qubits of such Hamiltonians? We show the following.

\begin{theorem}[\QMA-completeness of the Pinned Stoquastic 3-Local Hamiltonian problem]
\label{th:stoq}
The Pinned Stoquastic 3-Local Hamiltonian problem is \emph{\QMA}-complete.
\end{theorem}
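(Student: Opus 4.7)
The plan is to mirror the proof of Theorem~\ref{th:comm}: take a $2$-local QMA-hard Hamiltonian and, by coupling to a single ancilla qubit that is afterwards pinned, repackage it as a stoquastic $3$-local Hamiltonian $\tilde H$ whose pinned subspace reproduces the original spectrum. Containment in \emph{QMA} goes through exactly as for Pinned LH: the verifier reads in the $(n-1)$-qubit witness, locally appends $\ket\phi$, and runs Kitaev phase estimation on $\tilde H$.

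For hardness, I would start from the Biamonte--Love QMA-hard Hamiltonian $H$ built from terms in $\{Z_i,Z_iZ_j,X_i,X_iX_j\}$ and split it as $H = H_s + H_a$. Here $H_s$ collects the diagonal $Z,ZZ$ terms together with any $X,XX$ terms carrying non-positive coefficients---so $H_s$ already has non-positive off-diagonal entries in the computational basis and is stoquastic---while $H_a$ collects the remaining $X_i, X_iX_j$ terms with positive coefficients, whose matrix has purely off-diagonal, non-negative entries. With an ancilla qubit labelled $n+1$ I then define
\begin{align}
    \tilde H \;=\; H_s \otimes \ii_{n+1} \;-\; H_a \otimes X_{n+1}
\end{align}
and pin $\ket\phi = \ket{-}_{n+1}$. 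Note that $\tilde H$ is at most $3$-local ($2$ data qubits plus the ancilla), and since $\langle-|\ii|-\rangle=1$ and $\langle-|X|-\rangle=-1$, the effective operator on witnesses of the form $\ket\psi\ket{-}$ is $H_s + H_a = H$. The YES/NO energy bounds and the inverse-polynomial promise gap of BL therefore transfer to $\tilde H$ verbatim.

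The main technical step is verifying stoquasticity of $\tilde H$. The first summand inherits non-positive off-diagonals from $H_s$ directly. In the second summand, the $(ik,jl)$ entry is $-(H_a)_{ij}\,X_{kl}$, which vanishes whenever $k=l$, and equals $-(H_a)_{ij}\le 0$ whenever $k\ne l$ and $i\ne j$, using that $H_a$ has non-negative off-diagonals. The positions that need attention, and the place I expect the most careful bookkeeping, are $i=j$ with $k\ne l$, where the entry is $-(H_a)_{ii}$: this would violate stoquasticity unless the diagonal of $H_a$ vanishes. Fortunately this holds automatically here because every term of $H_a$ is a product of Pauli $X$'s, which is traceless on each computational basis diagonal. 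Had we started instead from a QMA-hard $H$ whose anti-stoquastic part carried a non-zero diagonal, the easy fix is to first absorb that diagonal into $H_s$ before applying the construction. With stoquasticity in hand, a \emph{Pinned Stoquastic $3$-Local Hamiltonian} oracle on $\tilde H$ with pin $\ket{-}$ decides the Local Hamiltonian problem for $H$, giving the hardness half and, together with the containment argument, \emph{QMA}-completeness.
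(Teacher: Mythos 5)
Your proposal is correct and follows essentially the same route as the paper: split the Biamonte--Love Hamiltonian into a stoquastic part $\hat O$ and a part $\hat P$ with non-negative off-diagonal entries, form $\hat O\otimes\ii - \hat P\otimes X_q$ on one extra ancilla, and pin that ancilla to $\ket{-}$ so the promise gap transfers verbatim. The only (harmless) difference is that you start from the $Z,ZZ,X,XX$ variant of the Biamonte--Love model, so you never need the paper's extra step of handling $X\otimes Z$ cross terms via the decomposition $X\otimes Z = X\otimes\ket{0}\bra{0} - X\otimes\ket{1}\bra{1}$, and your explicit check that $H_a$ has vanishing diagonal correctly covers the one spot where stoquasticity could fail.
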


A different viewpoint on this problem is given in Ref.\ \cite{JGL}, where the authors show universality of adiabatic evolution in the highest excited state of a stoquastic Hamiltonian, and the \emph{\QMA} hardness of lower bounding the highest energy of such a Hamiltonian.

\begin{proof}
As in the proof of Theorem~\ref{th:comm}, we start with observing that Pinned stoquastic $k$-local Hamiltonian is in \emph{\QMA}, because \emph{Pinned LH} is in \emph{\QMA}. We will now show that looking at the ground state energy of a Hamiltonian with stoquastic terms with pinning a qubit results is as hard as for a general local Hamiltonian.

Let us start with an instance of the \emph{\QMA}-complete problem \emph{Local Hamiltonian}.
For each such Hamiltonian $H$, we can write another using only stoquastic terms, in order to deal with possible positive off-diagonal elements in $H$. For this, we will divide $H=\hat{O} + \hat{P}$ into local terms $\hat{O}$ which are diagonal or have negative off-diagonal elements, and local terms $\hat{P}$ with positive off-diagonal elements. Let us replace the latter with stoquastic terms as follows. 
First, add an extra qubit $q$ in a state vector $\ket{-} = \left(\ket{0}-\ket{1}\right)/\sqrt 2$ to the system. Second, modify each term $\hat{P}$ by attaching 
the operator $X_q$ and change its sign, generating a new, stoquastic Hamiltonian $H' = \hat{O}\otimes \ii -\hat{P}\otimes X_q$.
When we then look at state vectors of the form $\ket{\phi}\ket{-}$, the expectation values of the modified Hamiltonian will be 
\begin{align}
	\bra{-}\bra{\phi} H' \ket{\phi}\ket{-}_q &= 
    \bra{-}\bra{\phi} \hat{O} \otimes \ii- \hat{P} \otimes X_q \ket{\phi}\ket{-}_q
    \nonumber\\
    & = \bra{\phi} \hat{O} + \hat{P} \ket{\phi} = \bra{\phi} H \ket{\phi}. \label{stoqexpect}
\end{align}
The expectation value of a pinned state vector $\ket{\phi}\ket{-}$ for the stoquastic $H'$ is the same as for the state  vector $\ket{\phi}$ and the original Hamiltonian $H$.

In more detail, let us start with the \emph{\QMA}-complete $2$-local Hamiltonian made from terms $X, Z, X\otimes X, Z\otimes Z$, and $X\otimes Z$
\cite{BiamonteLove}. First, we will change each term of the $X$ type with a positive prefactor $x_a>0$ into
\begin{align}
	x_a X_a \qquad \mapsto \qquad - x_a X_a \otimes X_q,
\end{align}
which is stoquastic. When we pin the qubit $q$ in the state vector $\ket{-}_q$, the expectation value of the new term in the state vector $\ket{\phi}\ket{-}_q$
will be simply $x_a \bra{\phi} X_a \ket{\phi}$, thanks to $\bra{-}X_q\ket{-}=-1$.
We can deal with the terms of the type $XX$ with a positive prefactor just as easily.
Next, we will look at the terms $X\otimes Z$ in $H$, whose off-diagonal terms have a varying sign.
Because we can rewrite $X\otimes Z = X \otimes \ket{0}\bra{0} - X \otimes \ket{1}\bra{1}$, 
assuming $x_{a,b}>0$, the corresponding terms in $H'$ will be
\begin{align}
	x_{a,b}  X_a & \otimes Z_b  \label{stoqpin}\\&\mapsto  - x_{a,b} X_a \otimes \left( \ket{0}\bra{0}_b \otimes \,X_q 
 + \ket{1}\bra{1} \otimes \,\ii_q\right),\nonumber\\
-x_{a,b}  X_a & \otimes Z_b  \\ & \mapsto  
- x_{a,b} X_a \otimes \left( \ket{0}\bra{0}_b \otimes \,\ii_q 
+ \ket{1}\bra{1} \otimes \,X_q\right).                                     
                                    \nonumber
\end{align}
Observe that the modified terms are stoquastic, with only negative off-diagonal elements.

Consider now the new stoquastic $3$-local Hamiltonian $H'$ and ask whether its low-energy vectors can have the form $\ket{\phi}\ket{-}$. On the one hand, if the original $H$ has a ground state vector $\ket{\phi}$ with energy $a$, the state vector
$\ket{\phi}\ket{-}$ will have energy $a$ for the new stoquastic Hamiltonian $H'$.
On the other hand, if the energy of any state vector $\ket{\phi}$ for the Hamiltonian $H$ is at least $b$, 
the energy of any state vector of the form $\ket{\phi}\ket{-}$ is at least $b$ for the new commuting Hamiltonian $H'$.
Therefore, we have turned a \emph{Local Hamiltonian} problem with promise parameters $a,b$, into a \emph{Pinned Stoquastic Local Hamiltonian} with the same promise, with a doubled Hilbert space (adding a qubit), and stoquastic terms that have a locality increased by 1.
Solving \emph{Pinned Stoquastic LH} is thus at least as hard as \emph{LH}, and thus \emph{\QMA}-complete.
\end{proof}

Note that in the proof we provided, the type of the terms in $H'$ is different from $H$, as we were only interested in making them stoquastic, not keeping their form. It remains open to analyze what is the hardness of \emph{Pinned Stoquastic Hamiltonian} with restricted form (e.g., only XXX, ZZZ) or locality below 3.
After showcasing the pinning technique in two examples, we will continue exploring how far it takes us, applying it to simpler and simpler original Hamiltonians.


\subsection{Pinned Permutation Hamiltonians}
\label{sec:pinperm}

The possibilities opened in the previous sections motivate us to go further and design a classically looking problem about $0/1$ permutation matrices that will still be \emph{\QMA}-complete. This is a further restriction on stoquastic Hamiltonians.
We claim the following.
\begin{theorem}[\QMA-completeness of the Pinned Local Permutation Hamiltonian]
\label{th:perm}
Pinned Local Permutation Hamiltonian is \emph{\QMA}-complete, 
with a logarithmic number of pinned qubits.
\end{theorem}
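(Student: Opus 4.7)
My plan mirrors the structure of the proofs of Theorems \ref{th:comm} and \ref{th:stoq}: QMA-containment is immediate, since any Pinned Local Permutation Hamiltonian is a special case of \emph{Pinned LH}, which we already know lies in \emph{QMA}. So the real work is QMA-hardness, and I would reduce directly from the Biamonte--Love QMA-complete $2$-local Hamiltonian $H = \sum_i h_i T_i$ with $T_i \in \{X_a,\,Z_a,\,X_aX_b,\,Z_aZ_b,\,X_aZ_b\}$ \cite{BiamonteLove}, to a pinned $0/1$ permutation Hamiltonian $H''$ on $n+1$ qubits with a single ancilla $q$ pinned to $\ket{-}_q$. The key observation is that $\bra{-}\ii_q\ket{-}=+1$ and $\bra{-}X_q\ket{-}=-1$, so by placing an $X_q$ factor inside selected $0/1$ blocks acting on the ancilla, we can manufacture arbitrary $\pm 1$ patterns in the operator that acts effectively on the remaining $n$ qubits, while the whole matrix is still $0/1$.

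The heart of the proof is a per-term dictionary: for each $T_i$ I would exhibit an involutive $0/1$ permutation $P_i$ with $\bra{-}_q P_i \ket{-}_q = T_i$, so that $\bra{\psi}\bra{-}_q \left(\sum_i h_i P_i\right) \ket{\psi}\ket{-}_q = \bra{\psi} H \ket{\psi}$ for every $n$-qubit state $\ket\psi$. The $X$-type entries of the dictionary are already implicit in Theorem~\ref{th:stoq}: $X_a \mapsto -X_a\otimes X_q$ and $X_aX_b \mapsto -X_aX_b\otimes X_q$ (both symmetric bit-flip permutations, of locality $2$ and $3$), together with $X_aZ_b \mapsto X_a\otimes(\ket{0}\bra{0}_b\otimes\ii_q + \ket{1}\bra{1}_b\otimes X_q)$, the $3$-local permutation ``flip $a$, and flip $q$ iff $b=1$''. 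For the diagonal Paulis, which have no permutation implementation on their own, I would reuse the same ancilla and take $Z_a \mapsto \mathrm{CNOT}_{a\to q}$ and $Z_aZ_b \mapsto \sum_{s,t\in\{0,1\}}\ket{st}\bra{st}_{ab}\otimes X_q^{s\oplus t}$, i.e.\ flip $q$ exactly when $a\oplus b=1$. Each of these is an involution, hence a symmetric $0/1$ permutation matrix; a direct computation using $\bra{-}\ii_q\ket{-}=+1$ and $\bra{-}X_q\ket{-}=-1$ verifies that the pinned expectations are $Z_a$ and $Z_aZ_b$.

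With these gadgets, $H'' = \sum_i h_i P_i$ is a real linear combination of at most $3$-local $0/1$ permutation matrices, Hermitian because each $P_i$ is symmetric, and by construction its minimum energy over pinned states $\ket\psi\ket{-}_q$ equals the ground-state energy of the original Biamonte--Love $H$. Hence the promise gap is preserved exactly and \emph{Pinned Local Permutation Hamiltonian} is QMA-hard, establishing QMA-completeness together with the trivial containment. The main step to get right is the dictionary: checking that each replacement really is a symmetric $0/1$ permutation (not merely a sum of such), with matching locality and sign. The $X_aZ_b$ replacement is the only subtle case, since it must be split on the value of $b$ and is the step that forces the construction up to locality~$3$; this is essentially the same move as in the stoquastic construction. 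As in Theorems \ref{th:comm} and \ref{th:stoq}, the reduction leaves open whether the problem stays QMA-complete at locality $2$, or when restricted to a smaller family of permutations such as only $X$-products and $\mathrm{CNOT}$s.
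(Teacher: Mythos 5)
There is a genuine gap, and it lies in what the \emph{Pinned Local Permutation Hamiltonian} problem actually permits. Your reduction ends with $H''=\sum_i h_i P_i$, a real linear combination of permutation matrices in which the Biamonte--Love coefficients $h_i$ are simply carried over as weights. In the problem the paper is proving hard, the Hamiltonian itself must be a sum of $0/1$ permutation matrices --- there are no real, negative, or even non-unit prefactors available (your own dictionary entries $-X_a\otimes X_q$ and $-X_aX_b\otimes X_q$ are already not $0/1$ matrices). If arbitrary real coefficients on permutation terms were allowed, the restriction would be nearly vacuous, since real spans of permutation matrices cover essentially all matrices with equal row and column sums, and the ``classically looking $0/1$'' character of the problem would be lost. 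Your step of replacing $Z$ and $ZZ$ by CNOT-type permutations and reading the sign off a $\ket{-}$-pinned ancilla coincides with the paper's first step (where $X$ and $XX$ need no modification at all, so the extra $X_q$ you attach to them is unnecessary), but the paper's proof has a second, essential step that your proposal omits entirely: synthesizing the coefficients themselves by pinning.

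Concretely, for each desired prefactor $0<x<1$ the paper takes a binary expansion $x=\sum_{j=1}^{Q}x_j 2^{-j}$ with $Q=O(\log n)$, and for every nonzero bit adds a unit-coefficient permutation term $\hat{O}\otimes X_{q_j}$, where the ancilla $q_j$ is pinned to $\cos\alpha_j\ket{0}+\sin\alpha_j\ket{1}$ with $\sin 2\alpha_j=2^{-j}$, so that its pinned expectation contributes exactly $2^{-j}\hat{O}$; one more ancilla pinned to $\ket{-}$ supplies negative signs. The $2^{-Q}$ truncation of the coefficients only shrinks the inverse-polynomial promise gap, which is why the paper's final Hamiltonian acts on $n+Q+2$ qubits with $Q+2$ pinned qubits, locality up to $5$, and $O(MQ)$ terms, rather than your single pinned qubit at locality $3$. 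Without this coefficient-synthesis gadget (or an explicit argument that real prefactors are part of the problem definition, which they are not in the paper's formulation), your construction proves hardness of a different and much less restricted problem, so the central claim of the theorem is not established.
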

Note that (dynamical) universality for quantum computation with $0/1$ matrices has been previously demonstrated for example in the \PromiseBQP\string-rewriting problem of Wocjan and Janzing \cite{JanWocRewrite}, or the universal computation by quantum walk construction of Childs et al. \cite{ChildsWalk}.

\begin{proof}
One direction of Theorem~\ref{th:perm} is easy -- pinned local permutation Hamiltonian is obviously in \emph{\QMA}.
The more difficult part is again to construct \emph{\QMA}-hard instances of pinned $0/1$ Hamiltonian.
First, we will take a target Hamiltonian made from Pauli matrices, and replace them by $0/1$  matrices on a larger Hilbert space, with a technique similar to
those of Ref.~\cite{JGL}, where it has been used to build \emph{\QMA}-hard instances of stochastic matrices. Second, we will utilize pinning to generate the desired real-valued prefactors for the permutation, and thus also the effective original Pauli terms.

Consider an instance of the \emph{\QMA}-complete, $2$-local Hamiltonian problem with a Hamiltonian $H$ made from $X$, $Z$, $XX$ and $ZZ$ terms, as in Section~\ref{sec:pincomm}, with real-valued prefactors. Let us deal with Pauli terms first, and consider the prefactors later. The $X$ and $XX$ terms already are permutation matrices. For the $Z$ and $ZZ$ terms, we will add an auxiliary qubit $z$, and transform the interactions as
\begin{align}
	Z \quad \mapsto  \quad& \ket{0}\bra{0} \otimes \ii_z + \ket{1}\bra{1} \otimes X_z, \\
    Z\otimes Z \quad \mapsto \quad&  \left(\ket{0,0}\bra{0,0}+\ket{1,1}\bra{1,1} \right)\otimes \ii_z \\& + \left(\ket{0,1}\bra{0,1}+\ket{1,0}\bra{1,0} \right) \otimes X_z,\nonumber
\end{align}
generating $2$-local and $3$-local permutation matrices, made from $0/1$  elements. 
This results in a permutation Hamiltonian $H'$. When we pin the aixuliary qubit $z$ in the state vector $\ket{-}$, we can effectively generate the original $Z$ and $ZZ$ (and of course $X$ and $XX$) terms as we did for stoquastic Hamiltonians.

Second, we want to generate real-valued prefactors for the effective Pauli terms using permutation Hamiltonians. This is straightforward with the help of pinning, once we add and pin several auxiliary systems.
In the definition of \emph{Pinned Local Hamiltonian}, we allow for pinning of up to a polynomial number of qubits. 
In the problems considered so far, we pinned a single qubit. Here, we will use a logarithmic number of such auxiliary systems.

Let us start with a system described by the Hamiltonian $H'=\sum_i P_i$ built in the previous step as a sum of permutation matrices, with only $0,1$ elements, with a single $1$ in each row and column.
We will show how to add $Q+1$ qubits and interactions to form $H''$.
Pinning the $Q$ new auxiliary systems to a specific product subspace $\mathcal{S}$,
will then allow us to effectively investigate the target Hamiltonian $H=\Pi_{\mathcal{S}} H'' \Pi_{\mathcal{S}}$ with the desired form, up to precision $2^{-Q}$ for its terms.
This precision comes from the possibility of imprecisions of the original Local Hamiltonian problem.
If the original problem was given precisely, but with an inverse polynomial promise gap,
allowing for an inverse-polynomial imprecision in the Hamiltonian's elements simply shrinks the promise gap, if we consider a large enough $Q$, which is however still logarithmic in $n$.

Recall our target effective Hamiltonian $H$ has general real prefactors for its Pauli terms.
Let us consider the terms from the permutation Hamiltonian $H'$ from the first step.
Imagine we want the term $\hat{O}$ to have a prefactor $0<x<1$.
We will decompose $x$ into binary, up to some precision $Q$, as 
\begin{align}
x=\sum_{j=1}^{Q} \frac{x_j}{2^j},\end{align} with $x_j \in \{0,1\}$.
For each nonzero $x_j$, we will pin an auxiliary qubit $q_j$ to 
\begin{align}
	\ket{\alpha_j} = \cos \alpha_j \ket{0} + \sin \alpha_j \ket{1},\qquad \sin 2 \alpha_j = \frac{1}{2^j}, \label{permpin}
\end{align}
with a new term $\hat{O} \otimes X_{q_j}$ in $H'$ for each nonzero $x_j$,
in order that $\bra{\psi}\bra{\alpha_j} \hat{O}\otimes X_{q_j}\ket{\psi}\ket{\alpha_j}
=\bra{\psi}\hat{O}\ket{\psi}/2^j$.
Pinning the $Q$ auxiliary qubits to their respective state vectors $\ket{\alpha_j}$, 
altogether they become an effective Hamiltonian 
\begin{align}\left(\sum_{j=1}^{Q} \frac{x_j}{2^j}\right)\hat{O}
\end{align} on the $n$ qubits of the system.
Second, to generate effective negative prefactors, we use the standard trick from before, adding the auxiliary qubit $q_0$ pinned in the state vector $\ket{-}$, and an interaction of the form $\hat{O}\otimes X_{q_0}$ to the desired terms.

Let us summarize. 
Our target Hamiltonian $H$ acting on $n$ qubits has $M$ Pauli terms with real prefactors and locality at most 2. 
In step 1, we built an $n+1$ qubit permutation Hamiltonian $H'$ with locality at most 3,
which did not yet include the desired real prefactors.
In step 2, we constructed the final permutation Hamiltonian $H''$ which works on $n+Q+2$ qubits, and has at most $2M\times (Q+1)$ terms, with locality at most 5.
We pinned the auxiliary qubits $z$ and $q_0$ into the state vector $\ket{-}$, and the auxiliary qubits $q_1,\dots,q_Q$ into the states \eqref{permpin}. Determining the lowest energy of the pinned 5-local permutation Hamiltonian $H''$, with $Q+2$ pinned qubits, is thus \emph{\QMA}-hard, as it implies determining the ground state energy of the target local Hamiltonian $H$, an instance of the \emph{\QMA}-complete problem Local Hamiltonian.
Therefore, the Pinned permutation Hamiltonian problem is \emph{\QMA}-hard, as well as \emph{\QMA}-complete.
\end{proof}


\section{A dynamical view of pinning}
\label{sec:dynpin}

In the previous section, we looked at how pinning can contribute to the complexity of determining the static properties of local Hamiltonians -- the bounds on the energies of states from the pinned subspace. We now turn to a dynamical question, asking what pinning can contribute when applied to an evolution with a local, time-independent Hamiltonian with a restricted set of interactions (or unitaries). We will consider constantly measuring one qubit in a particular basis, pinning it via the {\em Zeno effect} to a particular state.
Note that this is different from {\em postselection.} There, one is allowed to choose a particular result of a measurement of a subsystem without regard of the result's (im)probability. This would give one immense computational power \cite{postselection}, as postselected quantum computation has the power of \PP, much larger than \NP.
Pinning does not allow us to choose a measurement result freely. Instead, we must rely on the Zeno effect to give us a high probability of the desired projection. Pinning is thus applicable in practice, unlike the theoretical concept of postselection. 

With frequent projective measurement, we effectively get access to a specific state of a qubit, and thus a specific subspace of the whole Hilbert space. We will show that the dynamics of restricted Hamiltonians in this chosen subspace can result in universal dynamics. In the circuit model, we know that access to specific states can greatly enhance the power of a restricted model. For example, a source of magic states is enough to turn computation with Clifford gates
into a universal quantum computation \cite{MagicStatesBravyi}.
Following a similar strategy as in the previous section, we will now show how to get universal quantum computation out of evolution with a restricted set of (e.g. commuting) Hamiltonians together with a fixed Pauli basis measurement of a \emph{single} qubit.


\subsection{Warm-up: evolution with pinned stoquastic Hamiltonians}
\label{sec:pinstoqevol}

We will start with a simple example of applying pinned evolution to stoquastic Hamiltonians. We know that evolution with stoquastic Hamiltonian is already universal for quantum computation, as shown by Childs et al. \cite{childs_universal_2009}. However, our pinned construction has its own merits, even over later developments \cite{ChildsWalk}, in terms of space/time requirements.
Moreover, it will be useful in Section~\ref{sec:stoqGSCON}, where we will use it for the proof of \QCMA-hardness of the \emph{Stoquastic GSCON} problem.
Note also that Fujii has shown how adding local measurements to adiabatic evolution with stoquastic Hamiltonians (stoqAQC) results in universality (for adaptive measurements) or quantum advantage (if using non-adaptive measurements) \cite{fujii_quantum_2018}. Again, what we do here is more efficient, requires smaller locality and easier control.

Let us then look at a system with a stoquastic Hamiltonian
\begin{align}
	H' = A \otimes \ii_{q} + B \otimes X_{q},	
\end{align}
made from two groups of local, stoquastic terms $A$ and $B$, with no positive off-diagonal entries.
Furthermore, we demand $B$ to be entirely off-diagonal.
The terms $B\otimes X_q$ include an interaction with an auxiliary qubit $q$, similarly to \eqref{stoqpin}. 
We can now show that pinned evolution with time-independent, local stoquastic Hamiltonians is universal for \BQP\ as follows.

We initialize the auxiliary qubit $q$ as $\ket{-}$, and measure it in the $X$ basis often enough. This likely pins the auxiliary qubit to the state 
vector $\ket{-}$. Meanwhile, the system evolves with $H'$. This results in a particular effective evolution.
Let us cut the time evolution into small steps of size $\delta \rightarrow 0$.
The evolution can be approximated as alternating the evolution $e^{-i\delta H}$ with a projection of the last qubit onto the state vector $\ket{-}$. It will be helpful to express
\begin{align}
	\bra{-} e^{-i\delta H}&  \ket{\psi}\ket{-}_q
		\approx \bra{-} e^{-i\delta A} e^{-i\delta B \otimes X_q} \ket{\psi}\ket{-}_q \nonumber\\
		&\approx \bra{-} (1 - i\delta A - i \delta B\otimes X_q) \ket{\psi}\ket{-}_q \nonumber\\
        &=  1 - i\delta (A-B)\ket{\psi} \approx e^{-i\delta (A-B)}\ket{\psi},
\end{align}
valid up to first order in $\delta$. This allows us to effectively evolve the state vector $\ket{\psi}$ with the general, non-stoquastic Hamiltonian $H=A-B$.

Moreover, because the last qubit is in an eigenstate of $X_q$, it never gets flipped into the state vector $\ket{+}$.
Thus, taking $\delta = {t}/{N}$, with $N \rightarrow \infty$, we can confidently say that
\begin{align}
	\ket{\psi(t)}_{\textrm{PE}} & = \left(\prod_{j=1}^{N} P_{-} \left(e^{-i\delta (A+B\otimes X)}\right)\right) \ket{\psi}
    \nonumber\\
   &  = e^{-it (A-B)} \ket{\psi} + \ket{\delta},
\end{align}
where $\ket{\delta}$ is an error state vector with norm of order at most $\delta = {t}/{N}$, i.e., going to zero as $N\rightarrow\infty$.
Therefore, we can simulate evolution with a time-independent, non-stoquastic local Hamiltonian using evolution according to a stoquastic local Hamiltonian (with locality increased by 1) and pinning.
This is universal for quantum computation (\BQP), when we recall various standard constructions for universal quantum computation by evolution with a time-independent, non-stoquastic local Hamiltonian, e.g., Ref.~\cite{universal2body}.


\subsection{Pinned evolution with commuting Hamiltonians}
\label{sec:pincommevol}

Let us now turn to our main result about pinned dynamics. We will investigate what kind of evolution we can achieve with pinned commuting local Hamiltonians. A similar question has been posed in the context of Hamiltonian purification, with an emphasis on obtaining an universal algebra \cite{burgarth}. Here, we will show how to efficiently simulate evolution with a non-commuting Hamiltonian $H=A+B$, made from two groups of terms that commute within the group. For this, we will construct a Hamiltonian
\begin{align}
	H' = 2 A \otimes \ket{+}\bra{+}_{q} + 2 B \otimes \ket{-}\bra{-}_{q}
\end{align}
 all of whose terms commute, by adding an auxiliary qubit $q$ as in \eqref{commutingpin}.
Let us now analyze what happens when we alternate computational basis measurements on the last qubit, initialized as $\ket{0}$, with evolution according to $H'$.
We will prove the following.
\begin{theorem}[Universality of commuting Pinned Evolution]
\label{th:pincommevol}
Pinned evolution with time-independent, local commuting Hamiltonians is universal for \BQP.
\end{theorem}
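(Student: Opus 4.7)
The plan is to mirror the proof of Theorem~\ref{th:pinstoqevol}, but with the Zeno pinning done in the computational basis (projecting onto $\ket{0}_q$) rather than in the $X$ basis, and with the two groups of terms now attached to the orthogonal projectors $\ket{+}\bra{+}_q$ and $\ket{-}\bra{-}_q$ so as to make $H'$ fully commuting. Start from a target Hamiltonian $H=A+B$ that is universal for BQP by time-independent continuous-time evolution in the sense of Ref.~\cite{universal2body}, and insist, exactly as in the proof of Theorem~\ref{th:comm}, that the splitting is into two internally commuting groups (e.g.\ $A$ containing $Z$, $ZZ$ terms and $B$ containing $X$, $XX$ terms, using the universal set of Ref.~\cite{BiamonteLove}). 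Then $H'=2A\otimes\ket{+}\bra{+}_q+2B\otimes\ket{-}\bra{-}_q$ is a sum of $(k+1)$-local terms all of which mutually commute, because all $A$-type terms sit in the $\ket{+}\bra{+}_q$ block and all $B$-type terms sit in the orthogonal $\ket{-}\bra{-}_q$ block.

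Next I would analyze one Zeno step. Initialize the ancilla in $\ket{0}_q=(\ket{+}_q+\ket{-}_q)/\sqrt{2}$, evolve for a short time $\delta$, then measure $Z_q$ (or equivalently project onto $\ket{0}_q$). Expanding to first order,
\begin{align}
\bra{0}_q e^{-i\delta H'}\ket{\psi}\ket{0}_q
 &\approx \bra{0}_q(\ii-i\delta H')\ket{\psi}\ket{0}_q\notag\\
 &= \ket{\psi}-i\delta\,\bra{0}H'\ket{0}_q\,\ket{\psi},
\end{align}
and a direct computation gives $\bra{0}H'\ket{0}_q=2A\cdot\tfrac{1}{2}+2B\cdot\tfrac{1}{2}=A+B=H$, so the pinned step implements $e^{-i\delta H}\ket{\psi}$ up to $O(\delta^2)$. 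The factor $2$ in $H'$ is chosen precisely to compensate for $|\braket{0}{\pm}|^{2}=1/2$; without it one would only recover $H/2$, which is equally BQP-universal after rescaling time.

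Then I would take the Zeno limit. Writing $\delta=t/N$ and
\begin{align}
\ket{\psi(t)}_{\mathrm{PE}}
 = \left(\prod_{j=1}^{N} P_0 \, e^{-i\delta H'}\right)\ket{\psi}\ket{0}_q,
\end{align}
a standard Trotter/quantum-Zeno estimate (analogous to the one used in Theorem~\ref{th:pinstoqevol}) shows that at each step the probability of projecting onto $\ket{0}_q$ is $1-O(\delta^{2})$, so the overall measurement-success probability goes to $1$ as $N\to\infty$, and
\begin{align}
\ket{\psi(t)}_{\mathrm{PE}}
 = e^{-it(A+B)}\ket{\psi}\otimes\ket{0}_q+\ket{\delta},
\end{align}
with $\|\ket{\delta}\|=O(t/N)$. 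Since $H=A+B$ can be chosen as a BQP-universal time-independent local Hamiltonian, this establishes the theorem.

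The main obstacle I expect is purely bookkeeping: verifying that the first-order expansion is really all that matters in the Zeno limit, i.e.\ that the $O(\delta^{2})$ leakage out of the $\ket{0}_q$ subspace, accumulated over $N=t/\delta$ measurements, still vanishes as $N\to\infty$, and that the commuting structure of $H'$ does not spoil this (it does not, because commuting terms Trotterize exactly within each group, and the only non-commutativity between the two groups is precisely what the Zeno projection converts into the effective $A+B$ dynamics). Everything else reduces to quoting the BQP-universality of time-independent local Hamiltonians from Ref.~\cite{universal2body}.
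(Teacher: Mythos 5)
Your proposal is correct and follows essentially the same route as the paper: the same commuting construction $H'=2A\otimes\ket{+}\bra{+}_q+2B\otimes\ket{-}\bra{-}_q$, the same computational-basis Zeno pinning of the ancilla initialized in $\ket{0}$, the same first-order-in-$\delta$ analysis yielding the effective $e^{-i\delta(A+B)}$ step with $O(\delta^2)$ leakage and Trotter-type error per step, and the same appeal to BQP-universality of time-independent local Hamiltonians split into two commuting groups. The paper's version just spells out the per-step error bookkeeping slightly more explicitly (success probability $1-O(t\delta\norm{A-B}^2)$ and accumulated error $O(t\delta\norm{A+B}^2)$), which matches your estimates.
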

\begin{proof}
Let us look at a short time interval $\delta = {t}/{N}$, with $N\rightarrow \infty$. The pinned evolution of the system will be well approximated by the evolution $e^{-i\delta H'}$ according to $H'$ for time $\delta$, and then a measurement in the computational basis.
This repeated measurement should on the one hand effectively pin the auxiliary qubit $q$ in the state vector $\ket{0}$,
as in Vaidman's bomb-testing procedure \cite{elitzur1993quantum} in its circuit setting \cite{kwiat1995interaction}. This is the Zeno effect, explained in detail e.g., 
in Ref.~\cite{SattathNagajQuantumMoney}, where we also find that the probability of a ``bad'' projection (a flip of the $\ket{0}$ to $\ket{1}$ scales as $O\left(\delta^2\right)$, and can be made arbitrarily small even after $O\left(\delta^{-1}\right)$ repetitions. 
On the other hand, what is the effective evolution of the rest of the system? Let us calculate
\begin{align}
	e& ^{-i\delta H'} \ket{\psi}\ket{0}
		= e^{-i2 \delta A \otimes \ket{+}\bra{+}} e^{-i2 \delta B \otimes \ket{-}\bra{-}} \ket{\psi}\ket{0} 
        \label{zenocomm1} \\
        &\approx \left(\ii - i 2 \delta A \otimes \ket{+}\bra{+}
        	 - i 2 \delta B \otimes \ket{-}\bra{-}\right)  \ket{\psi}\ket{0}  \\
        &\approx
        	\ket{\psi}\ket{0} - \frac{i 2 \delta}{\sqrt{2}} \left(
            	A \ket{\psi} \ket{+}
        	+ B \ket{\psi} \ket{-} \right) \\
       & = \left(\ii -i\delta (A+B)\right)\ket{\psi}\ket{0} - i \delta (A-B)\ket{\psi}\ket{1} \\
       & \approx e^{- i\delta (A+B)} \ket{\psi}\ket{0} - i\delta(A-B)\ket{\psi}\ket{1}, \label{zenocommfinal}
\end{align}
correct up to order $\delta$. Therefore, when we now measure the auxiliary qubit, we will get the result $0$ and obtain the state
$e^{-i\delta(A+B)}\ket{\psi}$, with probability $1-O(\left(\delta\norm{A-B}\right)^2)$.
Moreover, the state can also contain an error vector with norm $O((\delta \norm{A+B})^2)$,
as the evolution \eqref{zenocomm1} with commuting terms cannot produce mixed terms such as $AB$, while \eqref{zenocommfinal} does include them in its series expansion.

What happens when we repeat this evolve-measure procedure $N={t}/{\delta}$ times? We end up with the state vector
$e^{-it (A+B)}\ket{\psi}$, with an error vector of norm $O(t\delta \norm{A+B}^2)$, while the probability that all the $N$ measurements of the pinned auxiliary qubit result in $\ket{0}$ is lower bounded by $1-O(t\delta \norm{A-B}^2)$.
Therefore, we can simulate evolution with unrestricted (non-commuting) Hamiltonians using commuting Hamiltonians and pinning.
Starting with a universal local Hamiltonian built from two groups of commuting terms as in Section~\ref{sec:pincomm}, this directly translates into the statement of the theorem:
pinned evolution with commuting local Hamiltonians is universal for quantum computation.
\end{proof}

State preparation with a universal, 2-local, non-commuting construction \cite{universal2body} that has $O(L)$ gates in a circuit can thus be efficiently simulated with low error by time $O(L)$ evolution with 3-local, commuting terms, and frequent measurement of a single qubit.


\section{Ground state connectivity}
\label{sec:GSCON}

Our original motivation for exploring pinning was to understand better the variants of Gharibian and Sikora's \emph{Ground State Connectivity (GSCON)} problem \cite{GharibianSikora}. It asks about the possibility of traversing the low-energy subspace of a local Hamiltonian from one specific ground state to another, using local unitary transformations. 
Gosset, Mehta and Vidick \cite{GMV} have shown that the problem remains \emph{QCMA} complete even if only commuting Hamiltonians are used. 
In their proof, they use a trick similar to pinning -- combining the original Hamiltonian's terms with projections on auxiliary qubits to make the terms commute. 
Then they demand that the initial and final ground state have a few qubits in a specific state -- which means that the original non-commuting Hamiltonian's terms are effectively applied. 
Moreover, this has to be combined with the impossibility of a simple flip of this state without a computation being verified first. 
Nevertheless, it helped us realize that the \emph{GSCON}  formulation allows one to essentially fix some part of the ground state, adding extra power to restricted forms of Hamiltonians. 

Therefore, using techniques similar to Ref.~\cite{GMV}, hardness results for pinned local Hamiltonians should be translatable to hardness of \emph{GSCON}  for similarly restricted Hamiltonians. For example, we will be able to show \QCMA-hardness of \emph{GSCON}  for stoquastic Hamiltonians, building on Ref.~\cite{GMV} and the construction from Section~\ref{sec:pinstoqevol}. 
Moreover, in this context we will also provide some evidence into the free-fermionic variant of \emph{GSCON}, to be further developed in future work.


\subsection{Stoquastic GSCON}
\label{sec:stoqGSCON}

First, we will show how to build on the proof that the \emph{Ground State Connectivity (GSCON)} 
problem is \QCMA-complete for commuting Hamiltonians, as well as on universality of pinned stoquastic LH, and prove that
\emph{Stoquastic GSCON}  is \QCMA-complete.
The statement of the problem is identical to the \emph{Commuting GSCON}  problem in Ref.~\cite{GMV},  the only difference being the replacement of the word ``commuting'' by ``stoquastic''. 
We thus have: 
\begin{definition}[Stoquastic Ground State Connectivity $({H},{k},\eta_1,\eta_2,\eta_3,\eta_4,{\Delta},{l},{m},{U_\psi},{U_\phi})$]\label{def:GSCONstoq}
~

{\bf Input:}
    \begin{enumerate}
        \item {$k$}-local Hamiltonian ${H}=\sum_i H_i$ with stoquastic terms (i.e. with no positive off-diagonal elements), satisfying $\|{H_i}\|\leq 1$.

        \item $\eta_1,\eta_2,\eta_3,\eta_4, {\Delta}\in\R$, and integer ${m}\geq0$, such that 
        $\eta_2-\eta_1\geq {\Delta}$ and $\eta_4-\eta_3\geq {\Delta}$.

        \item Polynomial size quantum circuits {$U_\psi$} and {$U_\phi$} generating ``starting'' and ``target'' state vectors $\ket{\psi}$ and $\ket{\phi}$ starting from the $\ket{0}^{\otimes n}$ state, respectively, satisfying 
        $\bra{\psi}{H}\ket{\psi}\leq \eta_1$ and $\bra{\phi}{H}\ket{\phi}\leq \eta_1$.
    \end{enumerate}
 {\bf Output}:
\begin{enumerate}
    \item If there exists a sequence of {$l$}-local unitaries $(U_{i})_{i=1}^m \in U$
    such that
    \begin{enumerate}
        \item (Intermediate states remain in low energy space) For all $i\in [{m}]$ and intermediate states \\
				${\ket{\psi_i}:=U_i\cdots U_2U_1\ket{\psi}}$, one has $\bra{\psi_i}{H}\ket{\psi_i}\leq \eta_1$, and
        \item (Final state close to target state) $\norm{ U_{{m}} \cdots U_1 \ket{\psi}-\ket{\phi}} \leq \eta_3$,
    \end{enumerate}
    then output YES.
    \item If for all ${l}$-local sequences of unitaries $(U_{i})_{i=1}^{{m}}$, either:
    \begin{enumerate}
        \item (Intermediate state obtains high energy) There exists $i\in [{m}]$ and an intermediate state 
        vector ${\ket{\psi_i}:=U_i\cdots U_2U_1\ket{\psi}}$, such that $\bra{\psi_i}{H}\ket{\psi_i}\geq \eta_2$, or
        \item (Final state far from target state) $\norm{ U_{{m}} \cdots U_1 \ket{\psi}-\ket{\phi}} \geq \eta_4$,
    \end{enumerate}
    then output NO.
\end{enumerate}
\end{definition}
There is not that much that we need change in the proof of Theorem 6 in Ref.~\cite{GMV}, when we want to build a generic effective Hamiltonian from stoquastic instead of commuting terms, using ``pinning'' thanks to a restriction on the initial and final states, as well as the form of the Hamiltonian that we construct.

\begin{theorem}[\QCMA-completeness of the Stoquastic Ground State Connectivity Problem]
The Stoquastic Ground State Connectivity Problem is \QCMA-complete.
\end{theorem}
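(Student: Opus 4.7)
The plan is to follow the GMV proof of QCMA-hardness of Commuting GSCON \cite{GMV} essentially verbatim, substituting the commutator-inducing gadget \eqref{commutingpin} by the stoquasticization gadget of Section~\ref{sec:pinstoq}. Containment in QCMA is inherited from the generic GSCON containment argument: a QCMA verifier, given classical descriptions of the $l$-local unitaries $(U_i)_{i=1}^m$, prepares $U_\psi \ket{0}^{\otimes n}$, applies the unitaries one at a time, and after each step uses standard Hamiltonian simulation together with a Hadamard-test style energy estimator to check the energy bound against $H$, finally testing the overlap with $U_\phi\ket{0}^{\otimes n}$. None of these subroutines relies on the stoquastic promise.

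For QCMA-hardness, I would inspect the GMV reduction and identify the step where a non-commuting Hamiltonian $H_\mathrm{phys} = A+B$ (built from a circuit-to-Hamiltonian encoding of a QCMA computation) is converted into a commuting $H_\mathrm{GMV}$ via \eqref{commutingpin}. Instead of this commuting gadget, I would apply the stoquasticization of Theorem~\ref{th:stoq}: split each local term of $H_\mathrm{phys}$ into $\hat O + \hat P$, where $\hat O$ is diagonal or non-positively off-diagonal and $\hat P$ collects the positive off-diagonal entries, introduce a single additional ancilla qubit $q$, and replace the term by $\hat O \otimes \ii_q - \hat P \otimes X_q$. The resulting Hamiltonian $H_\mathrm{stoq}$ is manifestly stoquastic and, by \eqref{stoqexpect}, satisfies $\bra{\chi}\bra{-}_q H_\mathrm{stoq}\ket{\chi}\ket{-}_q = \bra{\chi}H_\mathrm{phys}\ket{\chi}$ for every $\ket{\chi}$. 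I then lift the starting and target circuits $U_\psi, U_\phi$ by appending a one-qubit preparation of $\ket{-}_q$, so that the new source and target of the GSCON instance are $\ket{\psi}\ket{-}_q$ and $\ket{\phi}\ket{-}_q$, whose energies match the GMV instance on $H_\mathrm{phys}$.

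The core correctness argument is that low-energy connectivity for $H_\mathrm{stoq}$ is equivalent to that for $H_\mathrm{phys}$. I would exploit that $H_\mathrm{stoq}$ commutes with $\ii\otimes X_q$, so the Hilbert space splits into the $X_q = \pm 1$ sectors; on the $-$ sector $H_\mathrm{stoq}$ acts as $H_\mathrm{phys}$, while on the $+$ sector it acts as the sign-flipped $\hat O - \hat P$. YES instances of the GMV problem lift immediately: the same $l$-local sequence, extended trivially on $q$, preserves $\ket{-}_q$ and therefore traces out a low-energy path of $H_\mathrm{phys}$ energies. For the NO direction, I would decompose each intermediate state of a hypothetical stoquastic traversal into its $X_q = \pm 1$ components and argue that (i) the $-$-component, once renormalized, defines a candidate traversal of the GMV instance under $H_\mathrm{phys}$, which by the NO promise must fail either the intermediate energy bound or the terminal overlap bound, and (ii) since both endpoints lie entirely in the $-$ sector and $l$-local gates on $q$ can transfer amplitude between sectors only locally, any excursion into the $+$ sector must be paid back at the endpoint as reduced overlap with $\ket{\phi}\ket{-}_q$.

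The main obstacle is exactly this sector bookkeeping: an adversarial prover might try to stash high-energy configurations in the $X_q = +1$ sector while keeping the total energy low, and might shuttle amplitude between sectors by acting on $q$. I would address this by rescaling the GMV promise parameters $\eta_1, \eta_2, \eta_3, \eta_4, \Delta$ so that the at most factor-of-two losses incurred by projecting onto the $-$ sector leave inverse polynomial promise gaps intact, using that a purely stoquastic penalty on $\ket{+}_q$ is not available (adding $X_q$ violates stoquasticity) and so the argument must rely on symmetry rather than direct sector pinning. The $X_q$-symmetry of $H_\mathrm{stoq}$, the fact that the source and target lie wholly in one sector, and the $l$-locality of the allowed gates together make these perturbations at most inverse polynomial, and thus absorbable into the rescaled promise parameters, exactly as in the static stoquasticization step of Theorem~\ref{th:stoq}.
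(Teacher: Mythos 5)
Your containment argument and the YES-direction lifting are fine, but the hardness reduction has a genuine gap: in GSCON, unlike in the \emph{Pinned Stoquastic LH} problem, nothing forces the sign ancilla $q$ to stay in $\ket{-}_q$. The prover's $l$-local unitaries may act on $q$, and a stoquastic Hamiltonian cannot energetically penalize $\ket{+}_q$ on a single qubit (that would require a $+X_q$ term, which is not stoquastic --- the paper even notes that a stoquastic energetic preference for $\ket{-}$ would make \emph{Stoquastic LH} QMA-complete). Hence in the $X_q=+1$ sector the effective Hamiltonian is the uncontrolled operator $\hat O-\hat P$, whose low-energy landscape you know nothing about, and a cheating prover in a NO instance can flip $q$ exactly with a single $1$-local gate (or rotate system and $q$ jointly), traverse within the $+$ sector, and coherently return to the $-$ sector at the end. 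Your claim (ii), that any excursion into the $+$ sector must be paid back as reduced final overlap, is therefore false --- later unitaries can bring the amplitude back --- and claim (i) also fails, because the renormalized $-$-component of a unitary trajectory is not itself generated by a sequence of $l$-local unitaries and so does not constitute a legal traversal of the original instance. Rescaling the promise parameters cannot repair this: the issue is not a bounded perturbation but the absence of any soundness mechanism attached to the ancilla. You rely on ``symmetry rather than direct sector pinning,'' but the symmetry of $H_{\mathrm{stoq}}$ constrains nothing, since the prover's gates need not respect it.

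The paper's proof supplies exactly the missing mechanism, and this is why it uses two $3$-qubit registers rather than one extra qubit. The sign operator is $Q=\frac{1}{3}(X_{q_1}+X_{q_2}+X_{q_3})$ on a three-qubit register, and the Hamiltonian contains the additional stoquastic $2$-local penalty $\ii\otimes R_3$, with $R_3$ as in \eqref{R3}, confining that register to the span of $\ket{-}^{\otimes 3}$ and $\ket{+}^{\otimes 3}$; rotating between $\ket{-}^{\otimes 3}$ and $\ket{+}^{\otimes 3}$ by $2$-local gates is then itself an energetically protected traversal, and soundness is inherited from the Small Projection Lemma of Ref.~\cite{GMV} with $P_0\mapsto\ket{-}\bra{-}^{\otimes 3}$, $P_1\mapsto\ket{+}\bra{+}^{\otimes 3}$, yielding $\eta_2=\Omega(\beta^2/m^6)$. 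Moreover, the paper does not simply stoquasticize a Hamiltonian and reuse the old endpoints: it rebuilds the GMV structure with $H'=H\otimes R_3$ on a middle $3$-qubit register whose flip from $\ket{-}^{\otimes 3}$ to $\ket{+}^{\otimes 3}$ is the intended traversal (cheap only if the first register holds a low-energy witness of $H$, by \eqref{stoqGSCONexpect}), while the last register carries $Q$ and is held in place by $\ii\otimes R_3$. To repair your argument you would need a stoquastically implementable penalty confining your sign ancilla, which does not exist for a single qubit; that is the idea your proposal is missing.
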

\begin{proof}
It is straightforward to see that the \emph{Stoquastic GSCON}  is in QCMA, with a witness encoding the sequence of unitaries, verifiable by a quantum computation. 
For the other direction, we are directly inspired by the proof of \QCMA-completeness of \emph{Commuting GSCON}  \cite{GMV}. 
There, the authors split a target generic (non-commuting) local Hamiltonian $G=A+B$ into two groups of local commuting terms, add two 3-qubit auxiliary registers, and set up the commuting Hamiltonian
\begin{align}
	A \otimes \Pi_{\mathcal{S}} \otimes \Pi_{+} + B \otimes \Pi_{\mathcal{S}} \otimes \Pi_{-} + \ii \otimes \ii \otimes \Pi_{\mathcal{S}}, \label{commgscon}
\end{align}
where $\Pi_{\mathcal{S}}$ projects onto $\mathcal{S} =\mathrm{span}\left\{\ket{0,0,0},\ket{1,1,1}\right\}$,
and $\Pi_{\pm}$ are projectors onto $\left(\ket{0,0,0}\pm \ket{1,1,1}\right)/\sqrt{2}$.
The \QCMA-hard \emph{GSCON}  question concerns the possible low-energy traversal from the state vector $\ket{0}^{\otimes n}\ket{1}^{\otimes 3}\ket{0}^{\otimes 3}$ to the state vector $\ket{0}^{\otimes n}\ket{0}^{\otimes 3}\ket{0}^{\otimes 3}$ by $2$-local operations. This is possible by using the first $n$-qubit register to prepare a low-energy witness for the Hamiltonian $G=A+B$. This effectively ``turns off'' the first two terms in \eqref{commgscon}, allowing one to flip the middle register to $\ket{1,1,1}$ by $2$-local operations without a high energy cost. Finally, one uncomputes the first register. Meanwhile, the last register stays ``pinned'' in $\ket{0,0,0}$, making sure both groups of terms $A$ and $B$ are in play and contribute significantly to the energy of the intermediate states. For more details, 
see the proof of Theorem 6 of Ref.~\cite{GMV}.

Let us then work out the stoquastic version of this. We start with an $n$-qubit register, and the target generic, non-stoquastic, $2$-local, $n$-qubit Hamiltonian $H$ made from $ZZ$, $ZX$, $XX$, $Z$ and $X$ terms. The Local Hamiltonian problem for this variant of $H$ is \emph{\QMA}-complete. The \emph{GSCON}  problem based on $H$ is thus \QCMA-complete.

We will construct a stoquastic \emph{GSCON}  Hamiltonian $H''$ similarly to \eqref{commgscon}, with a few important differences. 
First, let us define two operators 
\begin{align}
	Q = \frac{1}{3} \left( X_{q_1} + X_{q_2} + X_{q_3}\right),
\end{align}
an analogue of $X_q$ from Section~\ref{sec:pinstoq}, effectively flipping the sign when the auxiliary register is in the state vector $\ket{-}^{\otimes 3}$, and
\begin{align}
    R_{3} =  \frac{3}{4}\,\ii - \frac{1}{4}\left( X_{q_1}X_{q_2} + X_{q_2}X_{q_2} + X_{q_1}X_{q_3} \right),
    \label{R3}
\end{align}
a $2$-local, stoquastic operator equivalent to the projector onto the space orthogonal to the span of $\ket{-}^{\otimes 3}$ and $\ket{+}^{\otimes 3}$. 

Second, let us add a 3-qubit auxiliary register and combine the original Hamiltonian $H$ with the operator $R_3$ as $H' = H\otimes R_3$. 
Similarly to Section~\ref{sec:pinstoq}, we can split this local Hamiltonian $H'$ acting on $n+3$ qubits into groups of local terms $H' = \hat{O}' + \hat{P}'$, with non-positive off-diagonal terms $\hat{O}'$ and a group of strictly off-diagonal local terms with positive elements $\hat{P}'$.

Finally, we combine the group $\hat{P}'$ with the operator $Q$ on the final auxiliary register, in order to ensure that $-\hat{P}'\otimes Q$ is stoquastic, with strictly negative off-diagonal elements, as $\hat{P}'\otimes Q$ is a tensor product of two operators which each have strictly positive off-diagonal elements and no diagonal elements. Altogether, we arrive at the local, stoquastic Hamiltonian
\begin{align}
	H'' = \hat{O}' \otimes \ii - \hat{P}' \otimes Q + \ii \otimes R_{3}.
\end{align}
  Observe that for the state vectors of the form $\ket{\psi}\ket{-}^{\otimes 3}\ket{-}^{\otimes 3}$ and
  $\ket{\psi}\ket{+}^{\otimes 3}\ket{-}^{\otimes 3}$, the expectation value of $H''$ is zero. Meanwhile,
  when the middle register is in an $X$-basis state vector $\ket{x_1, x_2 , x_3}$ other than $\ket{-}^{\otimes 3}$ or $\ket{+}^{\otimes 3}$, and the last register remains 
  in $\ket{-}^{\otimes 3}$, the expectation value 
  \begin{align}
  	\bra{\psi}& \bra{x_1, x_2 , x_3}\bra{-}^{\otimes 3} H'' \ket{\psi}\ket{x_1, x_2 , x_3}\ket{-}^{\otimes 3}\nonumber\\
    &  =\bra{\psi}\bra{x_1, x_2 , x_3} \hat{O}'+ \hat{P'} \ket{\psi}\ket{x_1, x_2 , x_3} \\
    & =\bra{\psi}\bra{x_1, x_2 , x_3} H \otimes R_3 \ket{\psi}\ket{x_1, x_2 , x_3} 
    = \bra{\psi}H\ket{\psi} 
    \label{stoqGSCONexpect}
  \end{align}
  is equivalent to the expectation value of the original non-stoquastic Hamiltonian $H$ acting on $\ket{\psi}$, thanks to 
  \begin{equation}
  \bra{x_1, x_2 , x_3}R_3\ket{x_1, x_2 , x_3} = 1.
  \end{equation}
 The hard ground space traversal question we 
 ask is then: 
 Decide, if starting in the state vector $\ket{0}^{\otimes n}\ket{-}^{\otimes 3}\ket{-}^{\otimes 3}$, 
 one can traverse the low-energy subspace of $H''$ without energy above $\alpha$ (where this bound comes from the \QCMA-complete \emph{LH} problem with energy bounds $\alpha$ and $\beta$)
 and at most $\eta_3$ far from the state
$\ket{0}^{\otimes n}\ket{+}^{\otimes 3}\ket{-}^{\otimes 3}$, using a sequence of $2$-local unitaries of length polynomial in $n$,
or whether one must end at least $\eta_4$ far from the final state, or some of the intermediate states have energy at least $\eta_2$?

Showing completeness is straightforward with the following sequence of transformations.
Note the third register stays in $\ket{-}^{\otimes 3}$ throughout the process.
First, we prepare the low-energy witness for $H$ in the first register. The energy is zero during this process. 
Second, we flip the second register from $\ket{-}^{\otimes 3}$ to $\ket{+}^{\otimes 3}$, qubit by qubit. In this process, the energy of the states is at most $\alpha$, thanks to \eqref{stoqGSCONexpect}. Finally, we uncompute the first register, keeping the energy zero. 

For soundness, one can directly follow \cite{GMV} to show that no sequence of $2$-local unitaries will satisfy well enough the two conditions -- end near enough the final state and stay low enough in energy throughout the sequence. 
The lower bound on the energy of the intermediate states if one is to end up close to the final state is in this case $\eta_2 = \Omega\left(\beta^2/m^6\right)$, just as in the proof of Soundness of Theorem 10 in Ref.~\cite{GMV}, where $\bra{\psi}H\ket{\psi} \geq \beta$ is the bound in the NO case of the original \emph{LH} problem and $m$ is the number of unitaries in the sequence.
One has only to replace
\begin{align}
	P_0 = \ket{0,0,0}\bra{0,0,0} \quad &\mapsto \quad \ket{-}\bra{-}^{\otimes 3}, \\
    P_1 = \ket{1,1,1}\bra{1,1,1} \quad &\mapsto \quad \ket{+}\bra{+}^{\otimes 3},
\end{align}
and follow the proof.
\end{proof}

Observe that in the {\em NO} case, to obtain soundness, an efficient (poly-length) sequence of $2$-local transformations keeping the energy of intermediate states low enough simply could not exist,
and this was guaranteed by the lower bound from the Small Projection Lemma~8 \cite{GMV}.
Would this be also true in other settings besides history state preparation connected to \emph{QCMA}-complete problems? We ask this question about quantum memories, e.g., based on the toric code, in forthcoming work.

\subsection{Ground state connectivity for free fermions}
\label{sec:dynamicalmatch}

In the context of studies of Majorana fermionic quantum memories, 
variants of \emph{GSCON} for free fermions are particularly interesting
\cite{Superconducting,Litinski2017}. 
Here we provide insights that we expect to be helpful in tackling this version of the problem relevant when 
assessing Majorana fermionic
quantum memories: 
we provide evidence that between any pair of low-energy free-fermionic states, there exists a local free-fermionic circuit that interpolates between them within the low-energy subspace. 
Before we get there, let us define the Free Fermionic Ground State Connectivity Problem, though.
Note also that our discussion of the free-fermionic problem does not rely on pinning, but complements our understanding of \emph{GSCON} in a practically relevant setting.

 \begin{definition}[Free Fermionic Ground State Connectivity $({H},{k},\eta_1,\eta_2,\eta_3,\eta_4,{\Delta},{l},{m},{U_\psi},{U_\phi})$]\label{def:GSCON}
 ~
 \begin{enumerate}
 \item {{\rm Input parameters:}}
     \begin{enumerate}
         \item {$k$}-local free fermionic Hamiltonian ${H}=\sum_i H_i$ acting on $n$ fermionic modes with each $H_i $ 
         being supported on no more than $k$ modes, satisfying $\|{H_i}\|\leq 1$.
         \item $\eta_1,\eta_2,\eta_3,\eta_4, {\Delta}\in\R$, and integer ${m}\geq0$, such that f
         $\eta_2-\eta_1\geq {\Delta}$ and $\eta_4-\eta_3\geq {\Delta}$.

         \item Polynomial size fermionic Gaussian 
         quantum circuits {$U_\psi$} and {$U_\phi$} generating ``starting'' and ``target'' fermionic Gaussian 
         state vectors $\ket{\psi}$ and $\ket{\phi}$ (starting from the fermionic vacuum), respectively, satisfying 
         $\bra{\psi}{H}\ket{\psi}\leq \eta_1$ and $\bra{\phi}{H}\ket{\phi}\leq \eta_1$.
     \end{enumerate}
    
 \item {\rm Output}:
 \begin{enumerate}
     \item If there exists a sequence of {$l$}-local unitaries $(U_{i})_{i=1}^m \in U$ supported on $m$ modes each
    such that
    \begin{enumerate}
         \item (Intermediate states remain in low energy space) For all $i\in [{m}]$ and intermediate states \\
 				+
 				${\ket{\psi_i}:=U_i\cdots U_2U_1\ket{\psi}}$, one has $\bra{\psi_i}{H}\ket{\psi_i}\leq \eta_1$, and
         \item (Final state close to target state) $\norm{ U_{{m}} \cdots U_1 \ket{\psi}-\ket{\phi}} \leq \eta_3$,
     \end{enumerate}
     then output YES.
     \item If for all ${l}$-local sequences of unitaries $(U_{i})_{i=1}^{{m}}$, either:
     \begin{enumerate}
         \item (Intermediate state obtains high energy) There exists $i\in [{m}]$ and an intermediate state 
         vector ${\ket{\psi_i}:=U_i\cdots U_2U_1\ket{\psi}}$, such that $\bra{\psi_i}{H}\ket{\psi_i}\geq \eta_2$, or
        \item (Final state far from target state) $\norm{ U_{{m}} \cdots U_1 \ket{\psi}-\ket{\phi}} \geq \eta_4$,
     \end{enumerate}
     then output NO.
 \end{enumerate}
 \end{enumerate}
 \end{definition}

Here, we do not assess the hardness of the \emph{Free Fermionic GSCON} problem. 
We conjecture that in contrast to the general case, in free fermions there will always exist a local low-energy path between any pair of low-energy quantum states.
 
\begin{conjecture}[Free Fermionic Ground State Connectivity] 
For any free fermionic Hamiltonian $H$ and any pair of low-energy Gaussian fermionic states $\ket \psi, \ket \phi$ there exists a $2$-local finite Gaussian fermionic circuit interpolating between them such that all intermediate states satisfy the energy constraint. 
 \end{conjecture}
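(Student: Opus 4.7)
The plan is to exploit the structure of the Gaussian-state manifold of free fermions together with the fact that the energy of a quadratic Hamiltonian is linear in the Majorana covariance matrix $\Gamma$. First I would diagonalize $H$ by a Bogoliubov transformation so that, in the resulting quasi-particle basis, $H = E_0 + \sum_k \epsilon_k d_k^\dagger d_k$ with $\epsilon_k > 0$, and let $\ket{0}_H$ denote its ground state. The states $\ket\psi, \ket\phi$ are Gaussian and hence specified by covariance matrices $\Gamma_\psi, \Gamma_\phi$ whose associated energies $\frac14 \text{tr}(iA\,\Gamma_{\psi,\phi})$ are at most $\eta_1$ (up to a constant).

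Second I would route the path through $\ket{0}_H$ and construct it in two halves, $\ket\psi \to \ket{0}_H$ and $\ket{0}_H \to \ket\phi$, each chosen to be energy-monotonic; this would immediately bound the intermediate energies by $\max(E_\psi, E_\phi) \leq \eta_1$. A natural candidate for each half is the gradient flow of the energy functional on the Gaussian manifold $O(2n)/U(n)$, which decreases energy monotonically and generically converges to $\ket{0}_H$. A more combinatorial alternative in the number-conserving case is to represent $\ket\psi$ by its occupied subspace $W_\psi \subset \mathbb{C}^n$ and compare it with the span $W_0$ of the $N$ lowest single-particle eigenmodes of $h$. Jordan's principal-angle decomposition naturally pairs modes of $W_\psi$ with modes of $W_0^\perp$, and a Givens rotation on each such pair of $H$-eigenmodes strictly lowers the energy, since $h$ is diagonal in this basis and no energy-raising off-diagonal cross terms appear along the rotation.

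Third, I would invoke the standard Reck/Clements decomposition: any Gaussian unitary factorizes as a product of $O(n^2)$ two-mode Gaussian unitaries. Applying this decomposition to each monotonic half of the continuous path yields a finite 2-local Gaussian circuit; a sufficiently fine discretization, or a slight relaxation absorbed into the GSCON promise gap $\Delta$, preserves the low-energy condition at every intermediate Slater determinant.

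The main obstacle is the tension between the energy-monotonicity argument — most natural in the $H$-eigenmode basis — and the 2-locality constraint, which is expressed in the original fermionic modes: a Givens rotation between two $H$-eigenmodes is generically non-local after translation back, and its Clements factorization passes through intermediate Slater determinants whose energies are \emph{not} a priori controlled by either endpoint. I would try to close this gap by an interleaved ``sort-and-rotate'' scheme that processes low-energy eigenmodes in order of $\epsilon_k$ and absorbs the diagonalizing basis change one $2\times 2$ block at a time, proving that each local Givens step cannot raise the total energy above $\max(E_\psi, E_\phi)$. Making that estimate tight, together with handling the non-generic case where the gradient flow meets a saddle and the genuinely non-number-conserving Bogoliubov structure, is in my view the real crux of turning the conjecture into a theorem.
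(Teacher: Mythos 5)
Keep in mind that the paper states this result only as a conjecture and supplies \emph{evidence}, explicitly deferring a full proof to future work; so the question is how your sketch compares to that evidence, and your route is genuinely different. The paper never routes through the ground state and never needs energy monotonicity. It works directly at the level of covariance matrices: the energy $\mathrm{tr}(\gamma h)$ is linear in $\gamma$, the Hamiltonian matrix $h$ is brought to $2\times 2$ block-diagonal form, and the Schur--Horn-type theorem for skew-symmetric matrices (Leite, Richa, Tomei) identifies the reachable set of block-diagonal projections within each parity sector. This yields a continuous curve $t\mapsto O(t)\in SO(2n)$ from $\gamma$ to $\omega$ along which the energy is \emph{exactly} the linear interpolation $(1-t)\,\mathrm{tr}(\gamma h)+t\,\mathrm{tr}(\omega h)$, hence never exceeds $\max(E_\psi,E_\phi)\leq\eta_1$. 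The locality issue is then handled by chopping this curve into $N$ steps, each implemented by an orthogonal matrix close to the identity, and applying the Reck-type two-mode decomposition to each \emph{near-identity} step, so that every $2$-local factor is itself near the identity and the partial products never stray far from the continuous curve; increasing $N$ pushes the energy deviation below the promise gap. What the paper's approach buys is precisely the removal of the obstacle you correctly flag as your crux: you propose decomposing \emph{large} rotations (Givens rotations between $H$-eigenmodes, or whole halves of a gradient-flow path), and it is exactly there that the intermediate Slater determinants lose energy control. If you only demand that intermediate energies be bounded by the endpoint energies (not that they decrease), the detour through the ground state, and with it the need for eigenmode-basis rotations, is unnecessary.

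Beyond that structural point, your detour introduces complications the paper's argument avoids and which you only partially acknowledge: (i) parity --- Gaussian unitaries preserve fermionic parity, and the ground state of $H$ may lie in the opposite parity sector from $\ket{\psi},\ket{\phi}$, so your waypoint must be the lowest-energy Gaussian state of the correct sector, not $\ket{0}_H$; (ii) the gradient flow on the compact Gaussian manifold can terminate at saddles or fail to single out a target when the spectrum is degenerate; (iii) the number-conserving occupied-subspace/principal-angle argument does not directly cover the general Bogoliubov--de Gennes case. None of these is fatal as a heuristic, but as it stands your proposal is a sketch with an open core step, comparable in status to, but not sharper than, the paper's evidence --- and the specific gap you leave open (energy control of the intermediate states of the $2$-local factorization) is one the paper's near-identity discretization already shows how to close.
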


We here provide evidence in favour of this conjecture. 
 Let us denote the fermionic covariance matrix
 of the initial state vector $|\psi\rangle$
 with $\gamma$ (in the conventions of 
 Ref.\ \cite{PhysRevB.97.165123}), and with
 $\omega$ the covariance matrix of the final
 state vector $|\phi\rangle$.
 For $n$ modes, this is 
 a real $2n\times 2n$
 matrix satisfying
 $\gamma=-\gamma^T$ (as is the case for any
 covariance matrix)
 and $\gamma^T\gamma=
 \ii$ (reflecting purity). The application of Gaussian fermionic
 gates to achieve
 $|\psi_i\rangle=U_i\cdots
 U_2 U_1|\psi\rangle$
corresponds to a transformation
 \begin{equation}
    \gamma_i:= O_i \cdots O_2
     O_1 
     \gamma O_1^T
     O_2^T \cdots 
     O_i^T
 \end{equation}
 with $O_i \in SO(2n)$ for all $i$,
on the level of covariance matrices. 
In the Free Fermionic Ground State Connecttivity Problem,  
the initial covariance matrix can be written as 
\begin{equation}   
\gamma = O \gamma_0 O^T
 \end{equation}
 with $O\in SO(2n)$ and either
\begin{equation}    
\gamma_0= \bigoplus_{j=1}^n \left[
\begin{array}{cc}
0 & 1\\
-1 & 0
\end{array}\right]
\end{equation}
or  
\begin{equation}    
\gamma_0= \left(\left[\begin{array}{cc}
0 & -1\\
1 & 0
\end{array}
\right]
\oplus
\bigoplus_{j=1}^{n-1} \left[
\begin{array}{cc}
0 & 1\\
-1 & 0
\end{array}\right]\right)
\end{equation}
depending on having
even or odd parity. 
Turning to Hamiltonians, energy expectation values are computed as
\begin{equation}
    \langle\psi|H|\psi\rangle = {\rm tr}(\gamma h),
\end{equation}
with $h=-h^T$.
For a local Hamiltonian $H=\sum_i H_i$, each of the terms $H_i$ will correspond to a matrix 
    $h_i= -h_i^T$
with $\|h_i\|\leq 1$
that is a zero matrix except a $2k\times 2k$ block, since each $h_i$ acts on $k$ modes only.
The Hamiltonian matrix $h$ can without loss of generality be assumed
to be $2\times 2$ block diagonal, as any special
orthogonal transformation to bring it into this
form can be absorbed in the $O$ of the initial
covariance matrix.  
The attainable energy expectations can be computed from the reachable set 
\begin{equation}
\left\{P(O\gamma_0 O^T): O\in SO(2n)\right\},
\end{equation}
 where $P$ is the projection onto $2\times 2$
block diagonal form. By virtue of the analog of the Schur-Horn theorem for skew-symmetric matrices~\cite{leite_geometric_1999}, it becomes clear that within both the even and the odd parity sectors, the reachable set are all $2\times 2$ skew-symmetric real
block diagonal 
matrices for even and odd parity, respectively.
As a consequence of that, there is a parametrized curve 
$t\mapsto O(t)$ for $t\in[0,1]$
with $O(t)\in SO(2n)$ for all $t$ so that
\begin{equation}
    \gamma= O(0)\gamma_0 O(0)^T
\end{equation}    
and
\begin{equation}
    \omega= O(1)\gamma_0 O(1)^T
\end{equation} 
so that
\begin{equation}
    {\rm tr}(O(t)\gamma_0 O(t)^T h)
    = (1-t) {\rm tr}(\gamma h) + 
    t{\rm tr}(\omega h).
\end{equation}
That is to say, one can linearly interpolate between the
initial and final energy values. 
One can then chop the linear interpolation into a finite number $N$ steps, each of which is characterized by an orthogonal matrix in $SO(2n)$ close in operator norm to the identity. 
What is more, following the special orthogonal fermionic
analog of the decomposition of Ref.~\cite{PhysRevLett.73.58}, this transformation can be exactly decomposed into a
an $O(n^2)$ sized circuit of $2$-local fermionic
Gaussian quantum gates that are also close to the identity. 
The so obtained discrete local fermionic circuit $\prod_{i = 1}^{O(n^2) N } O_i$ therefore remains close to the continuous curve $O(t)$ for all $t \in [0,1]$.
This implies that the energy along this circuit cannot deviate too much from the initial and final value. 
By increasing the value of $N$ we can push this deviation down arbitrarily far so as to satisfy the energy constraint throughout the path, providing evidence for our conjecture. 
We leave the details of this interesting problem relevant for practical quantum 
memories with Majorana fermions for future work.

\section{Discussion}



Pinning exemplifies the mathematical question of Hamiltonian purification \cite{purification}, which we looked at here in a variety of contexts (commuting, stoquastic, permutation, and other restricted classes of Hamiltonians). 
We have presented several results in Hamiltonian complexity, raising questions about the static (complexity) and dynamical (evolution and universality) implications of a special type of control on a small subsystem. Let us now discuss a few observations.

First, quantum perturbation gadgets that have been used in Hamiltonian complexity for a long time ever since \cite{KKR06}, are also based on a form of pinning -- effectively fixing part of a system into a subspace by providing a large energy penalty to the orthogonal subspace. 
They can result in an effective Hamiltonian with multiplicatively combined, higher-locality terms, thanks to the form of the perturbative expansion of the Hamiltonian's self-energy. 
On the other hand, pinning as we view it here, is a geometrical restriction on a part of a system. First of all, it is not perturbative, and second, it can effectively generate only linear and not multiplicative combinations of operators. 
Therefore, it does not allow one to combine operators to increase the effective locality of terms, which perturbative gadgets are designed to do. 
On the contrary, we need $k+1$ local terms in a pinned Hamiltonian to get an effective $k$-local Hamiltonian.
In particular, to show that \emph{Pinned Commuting 3-LH} is \emph{\QMA} complete in Section \ref{sec:pincomm}, we have turned a $2$-Local Hamiltonian problem with promise $b,a$, into a pinned version with a doubled Hilbert space by adding a qubit. 
Moreover, the newly formed up to $3$-local and commuting terms have the form $Z$, $X$, $ZX$, $XX$, $ZZ$, $ZZX$, or $XXX$. 
However, is the increase in locality essential? The complexity of \emph{Pinned $2$-Local Commuting Hamiltonian} 
remains open. Straightforward attempts mimicking perturbation gadgets to generate effective interactions with higher locality do not work. 
Similarly, we have shown in Section~\ref{sec:pinstoq} that the \emph{Pinned Stoquastic 3-LH} is \emph{\QMA}-complete. 
However, it remains open to figure out how hard the \emph{Pinned Stoquastic 2-LH} problem is.
One way to go could be to show that 2-LH with $\pm ZZ, -XX, \pm X,\pm Z$ terms is \emph{\QMA}-complete.

Second, our reason for investigating pinning was its application to Hamiltonians with a restricted form. 
Could pinning be ``forced'' with such restricted terms?
Sometimes, as in the application to  \emph{GSCON}, there exist operators with the desired form, which energetically penalize a subspace. For example, in Section~\ref{sec:stoqGSCON}, we wrote down the stoquastic operator \eqref{R3} that works as a projector onto the complement of $\ket{-}^{\otimes 3}$ and $\ket{+}^{\otimes 3}$, or in Ref.~\cite{GMV}, where a $3$-local projector has the required form commuting with the rest of the Hamiltonian.
However, in other situations we can not do this. For example, we can not energetically prefer the state vector $\ket{-}$ of a qubit by stoquastic 
terms, 
as that would imply \emph{\QMA}-completeness of the \emph{Stoquastic LH} problem, which is considered unlikely.
Thus, we require pinning as an external condition in the  \emph{Pinned Stoquastic LH} problem.
Similarly, we added dynamical pinning based on repeated measurements in Section~\ref{sec:dynpin} as an external resource, and not directly as a part of the Hamiltonian.
Third, it would be interesting to see whether pinning for some restricted models could result in intermediate complexity (e.g., completeness for transverse Ising models), as classified in Ref.~\cite{universal}.

Fourth, as pinning fixes a particular value of a certain subsystem, one naturally asks about its relationship to postselection. What we propose in Section~\ref{sec:dynpin} is far from postselection. In our Zeno-effect constructions, the probability of even many successful projections tends to 1. Our results say that universality can arise even from this small degree of practical control. On the other hand, postselection is about being able to postselect (``choose'' the value of measurement results regardless of the low probability of the outcome). It is known that this incredibly powerful ability would increase the computational power immensely -- e.g., postselected \BQP\ becomes \PP\ \cite{postselection}.

Fifth, we hope that our investigation will shed light on and motivate further inquiries into the complexity of the variants of the original local Hamiltonian problems -- stoquastic, commuting, or with other restrictions.

Finally, we hope that dynamical pinning based on extra control (repeated measurements) of a single qubit, described in Section~\ref{sec:dynpin}, with a fixed interaction Hamiltonian of a restricted form, could be readily implemented in today's experimental settings. It is also our hope that the present work can 
substantially contribute to the growing body of
solutions to problems in Hamiltonian complexity beyond
assessing the computational complexity of approximating
ground state energies, signifying the richness of the
field.

\section{Acknowledgements}
We thank an anonymous referee for valuable comments to the early version of this paper, especially on universal evolution with stoquastic Hamiltonians. D.~N.\ 
has received funding from the People Programme (Marie Curie Actions) EU’s 7th Framework Programme under REA grant agreement No. 609427. His research has been further co-funded by the Slovak Academy of Sciences, 
as well as by the Slovak
Research and Development Agency grant QETWORK APVV-14-0878 and VEGA MAXAP 2/0173/17. D.~H. and J.~E. have been supported by the ERC (TAQ), the 
Templeton Foundation, and the DFG (EI 519/14-1,
EI 519/15-1, CRC 183). M.~S. thanks the Alexander-von-Humboldt 
Foundation for support.


\begin{thebibliography}{10}

\bibitem{postselection}
S.~Aaronson.
\newblock Quantum computing, postselection, and probabilistic polynomial-time.
\newblock {\em Proceedings of the Royal Society A: Mathematical, Physical and
  Engineering Sciences}, 461:3473--3482, 2005.

\bibitem{aharonov_complexity_2011}
D.~Aharonov and L.~Eldar.
\newblock On the {complexity} of {commuting} {local} {Hamiltonians}, and
  {tight} {conditions} for {topological} {order} in {Such} {Systems}.
\newblock In {\em Proceedings of the 2011 {IEEE} 52nd {Annual} {Symposium} on
  {Foundations} of {Computer} {Science}}, {FOCS} '11, pages 334--343, USA,
  2011.

\bibitem{aharonov_complexity_2018}
D.~Aharonov, O.~Kenneth, and I.~Vigdorovich.
\newblock On the {complexity} of {two} {dimensional} {commuting} {local}
  {Hamiltonians}.
\newblock {\em arXiv:1803.02213 [quant-ph]}, 2018.
\newblock arXiv: 1803.02213.

\bibitem{Superconducting}
R.~Barends, J.~Kelly, A.~Megrant, A.~Veitia, Sank, E.~Jeffrey, T.~C. White,
  J.~Mutus, A.~G. Fowler, B.~Campbell, Y.~Chen, Z.~Chen, B.~Chiaro,
  A.~Dunsworth, C.~Neill, P.~O'Malley, P.~Roushan, A.~Vainsencher, J.~Wenner,
  A.~N. Korotkov, A.~N. Cleland, and J.~M. Martinis.
\newblock Superconducting quantum circuits at the surface code threshold for
  fault tolerance.
\newblock {\em Nature}, 508:500--5003, 2014.

\bibitem{BiamonteLove}
J.~D. Biamonte and P.~J. Love.
\newblock Realizable {Hamiltonians} for universal adiabatic quantum computers.
\newblock {\em Physical Review A}, 78:012352, 2008.

\bibitem{bravyi_complexity_2006}
S.~Bravyi, D.~P. DiVincenzo, R.~I. Oliveira, and B.~M. Terhal.
\newblock The {complexity} of {stoquastic} {local} {Hamiltonian} {problems}.
\newblock 2006.
\newblock arXiv: quant-ph/0606140.

\bibitem{bravyihastingsTIM}
S.~Bravyi and M.~Hastings.
\newblock On {complexity} of the {quantum} {Ising} {model}.
\newblock {\em Communications in Mathematical Physics}, 349:1--45, 2017.

\bibitem{MagicStatesBravyi}
S.~Bravyi and A.~Kitaev.
\newblock Universal {quantum} {computation} with ideal {Clifford} gates and
  noisy ancillas.
\newblock {\em Physical Review A}, 71, 2005.
\newblock arXiv: quant-ph/0403025.

\bibitem{bravyi_complexity_2009}
S.~Bravyi and B.~M. Terhal.
\newblock Complexity of stoquastic frustration-free {Hamiltonians}.
\newblock {\em SIAM Journal on Computing}, 39:1462--1485, 2009.

\bibitem{burgarth}
D.~K. Burgarth, P.~Facchi, V.~Giovannetti, H.~Nakazato, S.~Pascazio, and
  K.~Yuasa.
\newblock Exponential rise of dynamical complexity in quantum computing through
  projections.
\newblock {\em Nature Communications}, 5:5173, 2014.

\bibitem{childs_universal_2009}
A.~M. Childs.
\newblock Universal {computation} by {quantum} {walk}.
\newblock {\em Physical Review Letters}, 102:180501, 2009.
\newblock Publisher: American Physical Society.

\bibitem{ChildsWalk}
A.~M. Childs, D.~Gosset, and Z.~Webb.
\newblock Universal {computation} by {multiparticle} {quantum} {walk}.
\newblock {\em Science}, 339:791--794, 2013.

\bibitem{universal}
T.~S. Cubitt, A.~Montanaro, and S.~Piddock.
\newblock Universal quantum {Hamiltonians}.
\newblock {\em Proceedings of the National Academy of Sciences}, page
  201804949, 2018.

\bibitem{PhysRevB.97.165123}
J.~Eisert, V.~Eisler, and Z.~Zimbor\'as.
\newblock {Entanglement negativity bounds for fermionic Gaussian states}.
\newblock {\em Phys. Rev. B}, 97:165123, 2018.

\bibitem{elitzur1993quantum}
A.~C. Elitzur and L.~Vaidman.
\newblock Quantum mechanical interaction-free measurements.
\newblock {\em Foundations of Physics}, 23:987--997, 1993.

\bibitem{fujii_quantum_2018}
K.~Fujii.
\newblock Quantum speedup in stoquastic adiabatic quantum computation.
\newblock {\em arXiv:1803.09954}, 2018.

\bibitem{HamComplex}
S.~Gharibian, Y.~Huang, Z.~Landau, and S.~W. Shin.
\newblock Quantum {Hamiltonian} {complexity}.
\newblock {\em Foundations and Trends in Theoretical Computer Science},
  10:159--282, 2015.

\bibitem{GharibianSikora}
S.~Gharibian and J.~Sikora.
\newblock Ground {State} {Connectivity} of {local} {Hamiltonians}.
\newblock In M.~M. Halldórsson, K.~Iwama, N.~Kobayashi, and B.~Speckmann,
  editors, {\em Automata, {Languages}, and {Programming}}, pages 617--628,
  Berlin, Heidelberg, 2015.

\bibitem{GMV}
D.~Gosset, J.~C. Mehta, and T.~Vidick.
\newblock {QCMA} hardness of ground space connectivity for commuting
  {H}amiltonians.
\newblock {\em {Quantum}}, 1:16, 2017.

\bibitem{GossetTerhalParallel}
D.~Gosset, B.~M. Terhal, and A.~Vershynina.
\newblock Universal adiabatic quantum computation via the space-time
  circuit-to-{Hamiltonian} construction.
\newblock {\em Physical Review Letters}, 114, 2015.
\newblock arXiv: 1409.7745.

\bibitem{Easing}
D.~Hangleiter, I.~Roth, Nagaj, and J.~Eisert.
\newblock {Easing the Monte Carlo sign problem}.
\newblock {\em Science Advances}, 6:eabb8341, 2020.

\bibitem{JanWoc}
D.~Janzing and P.~Wocjan.
\newblock {BQP}-complete {problems} {concerning} {mixing} {properties} of
  {classical} {random} {walks} on {sparse} {graphs}.
\newblock 2006.
\newblock arXiv: quant-ph/0610235.

\bibitem{JanWocRewrite}
D.~Janzing and P.~Wocjan.
\newblock A {promiseBQP}-complete {String} {Rewriting} {Problem}.
\newblock {\em Quantum Information and Computation}, 10:234--257, 2010.

\bibitem{PhysRevA.77.062329}
S.~P. Jordan and E.~Farhi.
\newblock Perturbative gadgets at arbitrary orders.
\newblock {\em Phys. Rev. A}, 77:062329, 2008.

\bibitem{JGL}
S.~P. Jordan, D.~Gosset, and P.~J. Love.
\newblock Quantum-{Merlin}-{Arthur}--complete problems for stoquastic
  {Hamiltonians} and {Markov} matrices.
\newblock {\em Physical Review A}, 81:032331, 2010.

\bibitem{KKR06}
J.~Kempe, A.~Kitaev, and O.~Regev.
\newblock {The complexity of the local Hamiltonian problem}.
\newblock {\em SIAM Journal on Computing}, 35:1070--1097, 2006.

\bibitem{matchKempe}
J.~Kempe and K.~B. Whaley.
\newblock Exact gate sequences for universal quantum computation using the xy
  interaction alone.
\newblock {\em Physical Review A}, 65:052330, 2002.

\bibitem{kitaev_fault-tolerant_2003}
A.~Y. Kitaev.
\newblock Fault-tolerant quantum computation by anyons.
\newblock {\em Annals of Physics}, 303:2--30, 2003.

\bibitem{Curing}
J.~Klassen, M.~Marvian, S.~Piddock, M.~Ioannou, I.~Hen, and B.~Terhal.
\newblock {Hardness and ease of curing the sign problem for two-Local qubit
  Hamiltonians}.
\newblock {\em arXiv:1906.08800}.

\bibitem{dqc1}
E.~Knill and R.~Laflamme.
\newblock Power of {one} {bit} of {quantum} {information}.
\newblock {\em Physical Review Letters}, 81:5672--5675, 1998.

\bibitem{kwiat1995interaction}
P.~Kwiat, H.~Weinfurter, T.~Herzog, A.~Zeilinger, and M.~A. Kasevich.
\newblock Interaction-free measurement.
\newblock {\em Physical Review Letters}, 74:4763, 1995.

\bibitem{leite_geometric_1999}
R.~S. Leite, T.~R.~W. Richa, and C.~Tomei.
\newblock {Geometric proofs of some theorems of Schur-Horn type}.
\newblock 286:149--173, 1999.

\bibitem{Litinski2017}
D.~Litinski, M.~S. Kesselring, J.~Eisert, and F.~v. Oppen.
\newblock Combining topological hardware and topological software: Color-code
  quantum computing with topological superconductor networks.
\newblock {\em Phys. Rev. X}, 7:031048, 2017.

\bibitem{loh_sign_1990}
E.~Y. Loh, J.~E. Gubernatis, R.~T. Scalettar, S.~R. White, D.~J. Scalapino, and
  R.~L. Sugar.
\newblock Sign problem in the numerical simulation of many-electron systems.
\newblock {\em Phys. Rev. B}, 41:9301--9307, 1990.

\bibitem{marvian_computational_2018}
M.~Marvian, D.~A. Lidar, and I.~Hen.
\newblock On the computational complexity of curing non-stoquastic
  hamiltonians.
\newblock {\em Nature Communications}, 10(1):1571, 2019.

\bibitem{dqc1morimae}
T.~Morimae.
\newblock Hardness of classically sampling one clean qubit model with constant
  total variation distance error.
\newblock {\em Physical Review A}, 96:040302(R), 2017.

\bibitem{universal2body}
D.~Nagaj.
\newblock Universal two-body-{Hamiltonian} quantum computing.
\newblock {\em Physical Review A}, 85:032330, 2012.

\bibitem{SattathNagajQuantumMoney}
D.~Nagaj, O.~Sattath, A.~Brodutch, and D.~Unruh.
\newblock {An adaptive attack on Wiesner's quantum money}.
\newblock {\em Quantum Information {\&} Computation}, 16:1048--1070, 2016.

\bibitem{hqca1D}
D.~Nagaj and P.~Wocjan.
\newblock Hamiltonian {quantum} {cellular} {automata} in 1d.
\newblock {\em Physical Review A}, 78, 2008.
\newblock arXiv: 0802.0886.

\bibitem{OT06}
R.~I. Oliveira and B.~M. Terhal.
\newblock The complexity of quantum spin systems on a two-dimensional square
  lattice.
\newblock {\em Quantum Information and Computation}, 8:0900--0924, 2008.

\bibitem{purification}
D.~Orsucci, D.~K. Burgarth, P.~Facchi, H.~Nakazato, S.~Pascazio, K.~Yuasa, and
  V.~Giovannetti.
\newblock Hamiltonian purification.
\newblock {\em Journal of Mathematical Physics}, 56:122104, 2015.

\bibitem{HamiltonianComplexity}
T.~J. Osborne.
\newblock Hamiltonian complexity.
\newblock {\em Reports of Progress in Physics}, 75:022001, 2012.

\bibitem{PhysRevLett.73.58}
M.~Reck, A.~Zeilinger, H.~J. Bernstein, and P.~Bertani.
\newblock Experimental realization of any discrete unitary operator.
\newblock {\em Phys. Rev. Lett.}, 73:58--61, 1994.

\bibitem{SchuchCommuting}
N.~Schuch.
\newblock {Complexity of commuting Hamiltonians on a square lattice of qubits}.
\newblock {\em Quantum Information and Computation}, 11:901–912, 2011.

\end{thebibliography}

\end{document}